\tikzset{%
   every place/.style={thick, minimum size=8mm},
   every transition/.style={thick, minimum size=6mm},
   pre/.style={<-,shorten <=1pt,>=stealth',thick},
   post/.style={->,shorten >=1pt,>=stealth',thick},
   round/.style={rounded corners=5pt},
   fire/.style={transition,fill=yellow},
   posreg/.style={->,shorten >=1pt,>=stealth',very thick,green},
   negreg/.style={-|,shorten >=1pt,>=stealth',very thick,red}
}
\newcommand{\hl}[1]{#1}
\newif\ifcomments%
\newcommand{\lra}{\longrightarrow}
\renewcommand{\H}{\ensuremath{\mathcal{H}}\xspace}
\DeclareMathOperator{\on}{on}
\DeclareMathOperator{\off}{off}
\newtheorem{theorem}{Theorem}[section]
\newtheorem{corollary}[theorem]{Corollary}
\newtheorem{proposition}[theorem]{Proposition}
\newtheorem{definition}{Definition}
\journal{Journal of Theoretical Biology}
\begin{document}

\begin{frontmatter}

\title{\hl{Graphical Requirements} for Multistationarity in Reaction Networks and their Verification in BioModels\footnote{Article submitted to a special issue \emph{in memoriam} of René Thomas.}}

\author{Adrien Baudier}
\author{François Fages}
\ead{francois.fages@inria.fr}
\author{Sylvain Soliman}
\ead{sylvain.soliman@inria.fr}

\address{Inria Saclay Île-de-France, Palaiseau, France}

\begin{abstract}
\hl{Thomas's necessary conditions for the existence of multiple steady states in gene networks
have been proved by Soul\'e  with high generality for dynamical systems defined by differential equations.
When applied to (protein) reaction networks however, 
those conditions do not provide information  
since they are trivially satisfied
as soon as there is a bimolecular or a reversible reaction.
Refined graphical requirements have been proposed to deal with such cases.
In this paper, we present for the first time a graph rewriting algorithm for checking the refined conditions
given by Soliman, and evaluate its practical performance 
by applying it systematically to the curated branch of the BioModels repository.
This algorithm analyzes all reaction networks (of size up to 430 species) in less than 0.05 second per network,
and permits to conclude to the absence of multistationarity in 160 networks over 506.
The short computation times obtained in this graphical approach
are in sharp contrast to the Jacobian-based symbolic computation approach.
We also discuss the case of one extra graphical condition by arc rewiring that allows us to conclude on 20 more networks of this benchmark but with a high computational cost.
Finally, we study with some details the case of phosphorylation cycles and MAPK signalling models
which show the importance of modelling the intermediate complexations with the enzymes in order to correctly analyze the multistationarity capabilities of such biochemical reaction networks.
}
\end{abstract}

\begin{keyword}

   \hl{Multistability} \sep%
   Reaction \hl{networks} \sep%
   Influence \hl{networks} \sep%
   Positive circuits \sep%
   \hl{Systems biology} 



\end{keyword}

\end{frontmatter}


\section{Introduction}%
\label{sec:intro}

\hl{The wide variety of cells in a multicellular organism show that cells with identical copies of DNA may differentiate in different cell types.
In the late 40's,  Max Delbruck at Caltech suggested that each type of cell could correspond to a distinct steady state
in the dynamics of their shared gene expression network.
In order to analyze such large networks, Ren\'e Thomas conjectured in 1980 that the existence of a positive (resp.~negative) feedback loop 
was a necessary condition for multistationarity (resp.~sustained oscillations)}~\cite{Thomas81sss}.
\hl{Those conjectures were later proved in various formalisms (Boolean or discrete transition systems, differential equations)
with various degrees of generality.
In 2003, Christophe Soul\'e finally proved Thomas's necessary condition for multistationarity with full generality for dynamical systems defined by differential equations}~\cite{Soule03complexus}.

\hl{In his mathematical formalization of the conjecture, Soul\'e considers a differentiable mapping $F$ from a finite dimensional real vector space to itself,
and for each point $a$, the directed graph $G(a)$ where the arcs are the non-zero entries of the Jacobian matrix of $F$, labeled by their sign.
He shows that if $F$ has at least two non-degenerate zeroes, there exists $a$ such that $G(a)$ has a positive circuit.
}

\hl{When applied to (protein) reaction networks however, Thomas's necessary condition for multistationarity fails short since it is trivially satisfied
as soon as there exists either a bimolecular or a reversible reaction.
}
Indeed, 
a bimolecular reaction such as a complexation reaction immediately creates a mutual inhibition between the two reactants, i.e.~a
positive circuit, and a reversible reaction produces a mutual activation, i.e.~again
a positive circuit, making Thomas's necessary condition always true in those networks.

Nevertheless, reaction models are widespread in computational systems biology and it would be very desirable to be able to predict the absence of multistationarity
by systematically checking such conditions with efficient algorithms.
For instance, the BioModels database\footnote{\url{http://biomodels.net/}}~\cite{CLN13issb} is a
repository of more than 600 hand-curated models written in the Systems Biology Markup
Language (SBML)~\cite{Hucka08sbml2} mostly with reaction rules, over several tenths or hundred of molecular species.
There are hundreds more models in the non-curated branch, and thousands of models imported from metabolic networks databases with even larger numbers of reactions and species.

\hl{%
Soul\'e's proof, as most preceding and following proofs, uses the fact that the existence of multiple steady states 
implies a non-injectivity property which is shown to be equivalent
to a determinant being zero for some values of reaction rate constants.
One approach, called the Jacobian approach, is thus to use symbolic computation methods to directly compute the roots of that determinant.
If it is non-zero, one can conclude to the absence of multistationarity.
This is the approach taken by Feliu and Wiuf in}~\cite{FW13bi}.
\hl{%
Interestingly, they evaluated their algorithm, implemented in Maple 16, on the curated branch of BioModels (323 networks in their case),
showing that 31,6\% were injective and that only 8,3\% of the networks of this benchmark caused memory overflow by that method.
On the sequences of $r$ phosphorylation cycles of}~\cite{WS08jmb},
\hl{%
they could check non-injectivity up to $r=17$ cycles in 1200 seconds.

In this paper, we follow the alternative graphical approach to multistationarity analyses.
We describe a graph rewriting algorithm which deals with sequences of $r=1000$ phosphorylation cycles in a second,
and analyzes the curated branch of BioModels (506 networks in our case)
with a maximum computation time of 50 milliseconds per network (including large networks of size up to 430 species), 
while concluding to the non existence of multiple steady states in 160 networks of size up to 54 species in that benchmark, i.e.~with a similar ratio of 31.6\% of results concluding to non-multistationarity.

This algorithm is based on a refinement of the graphical requirements of Soul\'e}~\cite{Soule03complexus}\hl{%
given by the third author in}~\cite{Soliman13bmb} \hl{as a necessary condition for the existence of multiple steady states in (biochemical) reaction networks.
Similar graphical requirements have also been given in}~\cite{BC10aam}\hl{ without restriction to mass-action law kinetics,
but to our knowledge, it is the first time that they are implemented and evaluated systematically in model repositories.
}
For instance, we are not aware of similar evaluations obtained
with the Chemical Reaction Network Toolbox\footnote{\url{https://crnt.osu.edu/toolbox-history-and-explanation}} for \hl{systematically} checking the 
graphical conditions for multistationarity of Feinberg's Chemical
Reaction Network Theory (CRNT)~\cite{Feinberg77crt,CF06siamjam}.

\hl{More specifically, we present a series of graph rewriting algorithms for checking the different graphical requirements} of~\cite{Soliman13bmb},
and \hl{analyze their practical performance} in the curated models of BioModels, in order to:
\begin{itemize}
   \item evaluate when the original condition of Thomas allows \hl{us} to rule out
      multistationarity;

   \item evaluate when the following three extra conditions given in~\cite{Soliman13bmb} become conclusive, namely:

      \begin{enumerate}

         \item the positive circuit must not come from twice the same reaction;

         \item the positive circuit must not come from a reaction and its
            reverse reaction;

         \item the positive circuit must not involve all species of a
            conservation law;

      \end{enumerate}

   \item evaluate when even stronger conditions based on the
      rewirings detailed in~\cite{Soule03complexus,Soliman13bmb} are necessary to conclude, namely 
      \begin{enumerate}
         \item by sign change of incoming arcs on a set of species,
         \item or by permuting the arcs to a set of target species.
      \end{enumerate}
\end{itemize}

For this study, we used our software modelling environment BIOCHAM\footnote{\url{http://lifeware.inria.fr/biocham4}}~\cite{biocham4,cfs06bi}
to load all models from the curated branch of BioModels,
improve their writing in SBML with well-formed reactions using the algorithm described in~\cite{FGS15tcs},
compute the conservation laws~\cite{Soliman12amb},
compute their influence multigraph labelled by the reactions~\cite{FMRS18tcbb,FS08fmsb}
and export the labelled multigraph in the Lemon library format\footnote{\url{http://lemon.cs.elte.hu/}}.
Then we used an implementation in C++ 
of the algorithm presented in this paper to search for positive circuits
with the different refined conditions on the labelled influence multigraph,
and evaluate their respective contributions for the analysis of multistationarity in BioModels. 
\hl{All the computation times obtained with this algorithm given in this paper were obtained on a standalone desktop Linux machine with an Intel Xeon 3.6 GHz processor\footnote{For the sake of reproducibility, our programs and data are available at \url{https://lifeware.inria.fr/wiki/Main/Software\#JTB18}}.}

The rest of this article is organized as follows. The next section presents the refined necessary conditions for multistationarity in reaction networks
\hl{described} in~\cite{Soliman13bmb} and detailed here with five levels of conditions.
The following section presents a graph rewriting algorithm for checking those conditions, and evaluates its computational complexity.
Section~\ref{sec:biomodels} \hl{shows the remarkable performance of this algorithm by applying it} systematically to the curated part of the model repository BioModels, 
including models out of reach of Jacobian-based symbolic computation methods,
\hl{and details the effect of the five levels of refined conditions in this benchmark}.
Section~\ref{futile} considers the models of double phosphorylation cycles of Wang and Sontag \cite{WS08jmb} and 
shows a very low quadratic empirical complexity of the graphical algorithm, again in sharp contrast to 
symbolic computation methods.
Section~\ref{sec:erk} focusses on model 270 of ERK signalling that contains 33 species and 42 reactions resulting in an influence multigraph of 126 arcs
with many positive and negative feedback loops, yet for which our graphical algorithm demonstrates the absence of multistationarity.
\hl{These examples illustrate} the importance of \hl{modelling} the intermediate complexes in enzymatic reactions
\hl{to obtain multiple steady states}, and show the sensitivity of both the dynamical properties of the models and of our graphical conditions to
the writing of enzymatic reactions with or without intermediate complexes.
We conclude on the \hl{remarkable} performance of the graphical approach to analyze multistationarity in reaction models of large size,
and on some perspectives to further improve our algorithm and generalize this approach.

\section{Necessary Condition for Multistationarity in Reaction Networks}

Let us consider a biochemical reaction system with $n$ species
$S_1,\dots,S_n$ and $m$ reactions $R_1,\dots,R_m$. Using notations from~\cite{Kaltenbach12arxiv} we write:
\[R_j = \sum_{i=1}^n y_{ij}S_i \lra \sum_{i=1}^n y'_{ij}S_i\]
The $y$ and $y'$ represent the stoichiometric coefficients of the reactants
and products of the reaction.
The rate law associated with reaction $R_j$ will be written $v_j$. This
defines a dynamical system, in the form of an Ordinary Differential Equation
(ODE): $\dot x = F(x)$
where $x_i$ is the concentration of species $S_i$ and
\[f_i(x) = \sum_j v_j(x)\cdot (y'_{ij} - y_{ij})\]

This kind of reaction-based system encompasses most of the systems biology
models developed nowadays and made available in model repositories like BioModels. 
In particular, SBML reaction models can be translated with our notations,
basically by splitting reversible reactions into forward and backward
reactions, and by including modifiers on both sides of the reaction.

Reaction systems are often graphically represented as a Petri-net, i.e., a bipartite
graph for species and reactions~\cite{Ivanova79kk,IT79kk}. Using the same
bipartite vertices but different arcs and labels, it is possible to represent
the Directed Species-Reaction (DSR) graph of Kaltenchbach~\cite{Kaltenbach12arxiv}.
\hl{This graph is a variant of the DSR graph} of~\cite{BC09cms,BC10aam} \hl{with different labels and no sign}. 
\hl{Here the arcs of the DSR graph are defined and identified by their label $\lambda$} as follows: \[ \lambda(S_i, R_j) =
\frac{\partial v_j}{\partial x_i}\qquad \lambda(R_j, S_i) = y'_{ij} - y_{ij}
\]
If $\lambda$ is zero, then there is no arc. $\lambda$ is extended to
paths (resp.\ subgraphs) as the product of the labels of all arcs in the path
(resp.\ subgraph). For a path $P$, we shall write $\lambda_{SR}(P)$ (resp.\
$\lambda_{RS}(P)$) for the product of labels considering only species to
reaction (resp.\ reaction to species) arcs.

Intuitively, $\lambda_{SR}$ represent\hl{s} the contribution of species to each
reaction rate, whereas $\lambda_{RS}$ describe\hl{s} the stoichiometric effect
of reactions on each species.
Fig.~\ref{fig:dsr} shows the DSR graph for the chemical reaction network
corresponding to the enzymatic reaction $S + E \rightleftarrows ES
\longrightarrow E + P$.

\begin{figure}[htb]
   \begin{center}
      \begin{tikzpicture}
         \matrix[row sep=9mm,column sep=14mm] {%
            & & \node[place] (E) {E}; & & &\\
            & & \node[transition] (r1) {$R_1$}; & & &\\
            & \node[place] (S) {S}; & & \node[place] (ES) {ES};
            & \node[transition] (r3) {$R_2$}; & \node[place] (P) {P};\\
            & & \node[transition] (r2) {$R_{-1}$}; & & &\\
         };

         \draw[pre] (r1) to[bend right] node[right] {$\frac{\partial v_1}{\partial E}$} (E);
         \draw[post] (r1) to[bend left] node[left] {-1} (E);
         \draw[pre] (r1)  to[bend right] node[left] {$\frac{\partial v_1}{\partial S}$} (S);
         \draw[post] (r1)  to[bend left] node[right] {-1} (S);
         \draw[post] (r1) -- node[above] {1} (ES);
         \draw[pre] (r2)  to[bend right] node[right] {$\frac{\partial
               v_{-1}}{\partial ES}$} (ES);
         \draw[post] (r2)  to[bend left] node[left] {-1} (ES);
         \draw[post] (r2) -- node[above] {1} (S);
         \draw[pre] (r3)  to[bend right] node[above] {$\frac{\partial
               v_2}{\partial ES}$} (ES);
         \draw[post] (r3)  to[bend left] node[below] {-1} (ES);
         \draw[post] (r3) -- node[above] {1} (P);
         \draw[round, post] (r2) -- node[above] {1} ++(-35mm,0) |- (E);
         \draw[round, post] (r3) --  node[right] {1} (r3 |- E) -- (E);
      \end{tikzpicture}
   \end{center}
   \caption{DSR graph of the enzymatic reaction:
      $S + E \rightleftarrows ES \lra E + P$.}\label{fig:dsr}
\end{figure}

\begin{definition}\cite{Kaltenbach12arxiv} 
A \emph{species hamiltonian hooping} of the DSR graph is a collection of
   cycles covering each of the species nodes exactly once.
\end{definition}

   The set of all species hamiltonian hoopings will be denoted by $\H$.
Thanks to the fact that $\lambda(H) = \lambda_{SR}(H) \lambda_{RS}(H)$,
Kaltenbach~\cite{Kaltenbach12arxiv} proposed to group all species hamiltonian
hoopings having the same species-to-reaction arcs using an equivalence
relation noted $\sim$, writing $[H] = \{H'\in\H\mid H'\sim H\}$ for the equivalence class of a hooping
$H$, and
$\H/\sim$ for the quotient set. 

\begin{theorem}[\cite{Kaltenbach12arxiv}]\label{thm:kaltenbach}
   \[\det(J) = \sum_{[H] \in \H/\sim}\Lambda([H]) \lambda_{SR}(H)
      \quad\text{ with }\quad
   \Lambda([H]) = \sum_{H' \in [H]}\sigma(H') \lambda_{RS}(H')\]
\end{theorem}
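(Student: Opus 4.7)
The plan is to start from the factorization $J = N \cdot D$, where $N$ is the $n \times m$ stoichiometric matrix with entries $N_{ij} = y'_{ij} - y_{ij}$ and $D$ is the $m \times n$ matrix of partial derivatives $D_{jk} = \partial v_j / \partial x_k$. Differentiating $f_i(x) = \sum_j v_j(x)(y'_{ij} - y_{ij})$ recovers exactly this product. I would then apply Leibniz's determinant expansion,
\[
\det(J) = \sum_{\pi} \mathrm{sgn}(\pi) \prod_{i=1}^n J_{i,\pi(i)},
\]
substitute $J_{i,\pi(i)} = \sum_{j} N_{ij}\, D_{j,\pi(i)}$, and expand by distributivity to obtain a double sum over permutations $\pi$ of the $n$ species and arbitrary maps $\phi$ from species to reactions:
\[
\det(J) = \sum_{\pi,\phi} \mathrm{sgn}(\pi) \prod_{i=1}^n N_{i,\phi(i)} \, D_{\phi(i),\pi(i)}.
\]

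The second step is to interpret each generic term combinatorially on the DSR graph. The factor $D_{\phi(i),\pi(i)}$ is exactly the label of the species-to-reaction arc from $\pi(i)$ to $\phi(i)$, and $N_{i,\phi(i)}$ is the label of the reaction-to-species arc from $\phi(i)$ to $i$. Because $\pi$ is a permutation of the species, every species is the source of exactly one of these species-to-reaction arcs and the target of exactly one of these reaction-to-species arcs, so the $2n$ arcs form a bipartite subgraph that decomposes uniquely into alternating cycles covering every species exactly once -- a species hamiltonian hooping $H(\pi,\phi)$. Conversely, any hooping determines $\phi$ (the reaction feeding each species through its incoming arc) and $\pi$ (the species whose outgoing arc is paired, at reaction $\phi(i)$, with the incoming arc to $i$), so $(\pi,\phi) \mapsto H(\pi,\phi)$ is a bijection between $\mathfrak{S}_n \times \{1,\dots,m\}^n$ and $\H$. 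Moreover, each bipartite cycle of length $2k$ in the hooping arises from a single $k$-cycle of $\pi$, so $\pi$ and $H$ have the same number of cycles and $\mathrm{sgn}(\pi) = \sigma(H)$ under the standard hooping sign convention.

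Substituting back yields $\det(J) = \sum_{H \in \H} \sigma(H)\, \lambda_{RS}(H)\, \lambda_{SR}(H)$. To reach the stated form I partition $\H$ along the equivalence classes of $\sim$. Since $\sim$ identifies hoopings with identical species-to-reaction arcs, $\lambda_{SR}$ is constant on each class $[H]$ and can be pulled out of the inner sum, giving
\[
\det(J) = \sum_{[H] \in \H/\sim} \lambda_{SR}(H) \sum_{H' \in [H]} \sigma(H')\, \lambda_{RS}(H') = \sum_{[H] \in \H/\sim} \Lambda([H])\, \lambda_{SR}(H).
\]
The main obstacle I anticipate is the combinatorial bookkeeping: checking that the $(\pi,\phi)$ parameterization really matches hamiltonian hoopings exactly -- in particular, that at reactions visited several times the pairing between incoming and outgoing arcs is forced by the shared index $i$ -- and verifying that Kaltenbach's hooping sign $\sigma(H)$ coincides with $\mathrm{sgn}(\pi)$ with no hidden reshuffling sign coming from cycle reordering. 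Once those identifications are settled, the rest reduces to elementary distributivity and regrouping.
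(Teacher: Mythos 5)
The paper does not actually prove this statement: Theorem~\ref{thm:kaltenbach} is imported verbatim from Kaltenbach's paper, so there is no in-paper argument to compare yours against. Your reconstruction is the standard one and is sound. Writing $J=ND$ with $N_{ij}=y'_{ij}-y_{ij}$ and $D_{jk}=\partial v_j/\partial x_k$ is exactly what the definition of $f_i$ gives, and expanding the Leibniz sum by distributivity produces the double sum over $(\pi,\phi)$. Your identification of each term with a species hamiltonian hooping is the right bijection, and you correctly isolate the one place it could go wrong: at a reaction node visited several times, the pairing of incoming with outgoing arcs is recorded by the shared index $i$, which is precisely why hoopings must be taken as collections of cycles rather than as mere arc sets --- this is how Kaltenbach defines them, so the correspondence is indeed one-to-one. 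The two remaining items you flag are definitional rather than substantive: $\sigma(H)$ is never defined in the present paper, so the identity $\sigma(H)=\mathrm{sgn}(\pi)=(-1)^{n-c}$, with $c$ the common number of cycles, has to be matched against Kaltenbach's sign convention rather than derived; and the constancy of $\lambda_{SR}$ on $\sim$-classes is immediate because each species carries exactly one outgoing species-to-reaction arc in any hooping, so no multiplicity issues arise in the product. Modulo those two checks your argument is complete.
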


Considering Soulé's proof of Thomas's conjecture for dynamical
systems~\cite{Soule03complexus} and applying Thm.~\ref{thm:kaltenbach} to each
sub-DSR-graph corresponding to a principal minor of $-J$,~\cite{Soliman13bmb}
notes that a necessary condition for multistationarity is that some term of
the sum is negative. This again states the usual condition about the existence
of a positive cycle in the influence graph of $J$.

\hl{Now, the usual labelling of the arcs of the influence graph between molecular species by the sign of the Jacobian matrix coefficient
can be augmented to
contain not only the sign but also the reaction used for each arc. There is
thus an arc in this reaction-labelled influence multigraph for each
species-to-species path of length two in the DSR graph. This leads to a
one-to-one correspondence between hamiltonian hoopings of the
reaction-labelled influence multigraph and species hamiltonian hoopings of the DSR
graph. Fig.~}\ref{fig:labelled} \hl{illustrates this on the example of
Fig.~}\ref{fig:dsr}.

\begin{figure}[htb]
   \begin{center}
      \begin{tikzpicture}
         \matrix[row sep=3cm,column sep=2cm] {%
            & \node[place] (E) {E}; & & &\\
            \node[place] (S) {S}; & & \node[place] (ES) {ES};
            & & \node[place] (P) {P};\\
         };

         \draw[posreg,bend left] (E) to node[auto,black] {$R_1$} (ES);
         \draw[posreg] (S) to node[auto,black] {$R_1$} (ES);
         \draw[posreg,bend left] (ES) to node[auto,black] {$R_{-1}$} (E);
         \draw[posreg,bend left] (ES) to node[auto,black] {$R_{-1}$} (S);
         \draw[posreg] (ES) to node[auto,black] {$R_2$} (E);
         \draw[posreg] (ES) to node[auto,black] {$R_2$} (P);
         \draw[negreg,bend left] (E) to node[auto,black] {$R_1$} (S);
         \draw[negreg,bend left] (S) to node[auto,black] {$R_1$} (E);
      \end{tikzpicture}
   \end{center}
   \caption{Influence \hl{multigraph} associated to the Michaelis-Menten reaction
      system of the three reactions $S + E \rightleftarrows ES \lra E + P$. 
The influence arcs are
      labelled both by their sign, as usual and by the unique reaction from which they originate.
Note for instance that there are two positive
      arcs from \lstinline|ES| to \lstinline|E|. Negative self-loops are omitted for
      clarity}\label{fig:labelled}
\end{figure}
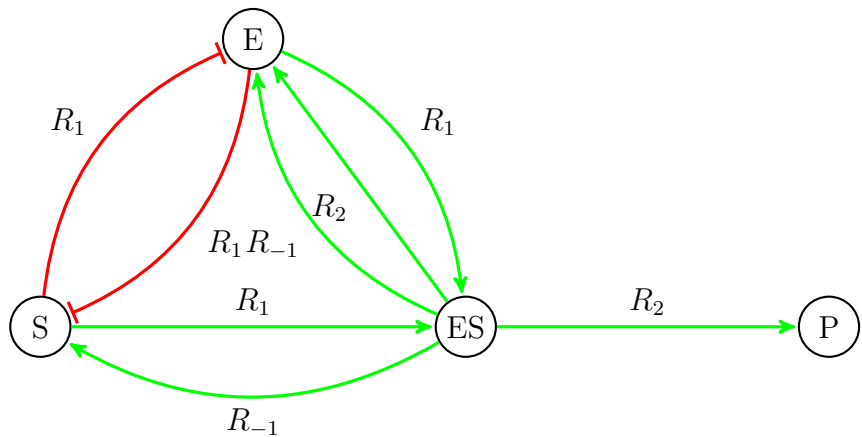

\hl{Let us denote by $|_H$ the} restriction of the reaction system to a species hooping $H$,
i.e.~the
   system where reactions $\{R_i \mid i \in I\}$ not appearing in $H$ are
   omitted.

\begin{theorem}[\cite{Soliman13bmb}]\label{thm:fullrank}
Let $F$ be any differentiable map from $\Omega$ to $\mathbb{R}^n$
   corresponding to a biochemical reaction system. If $\Omega$ is open and $F$
   has two nondegenerate zeroes in $\Omega$ then there exists some $a$ in
   $\Omega$ such that:
   \begin{enumerate}
      \item The reaction-labelled influence graph $G$ of $F$ at point $a$
         contains a positive circuit $C$;

      \item There exists a hooping $H$ in $G$, such that $C$ is subcycle of
         $H$ with $(Y' - Y)|_H$ of full rank.
   \end{enumerate}
\end{theorem}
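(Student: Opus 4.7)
The plan is to combine Soulé's theorem for dynamical systems with Kaltenbach's expansion of the Jacobian determinant (Theorem~\ref{thm:kaltenbach}). I would start from the fine form of Soulé's argument: the existence of two non-degenerate zeroes of $F$ forces the existence of a point $a \in \Omega$ and a subset $I \subseteq \{1,\dots,n\}$ of species indices such that the principal minor $\det(J_I(a))$ of the Jacobian does not carry the sign one would obtain if every circuit of the corresponding restricted influence subgraph were negative. In other words, the expansion of this minor must contain at least one term with an anomalous sign.

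Applying Theorem~\ref{thm:kaltenbach} to the sub-DSR graph obtained by restricting to the species in $I$, I would rewrite
\[
\det(J_I(a)) = \sum_{[H] \in \H/\sim} \Lambda([H])\,\lambda_{SR}(H).
\]
For two hoopings in the same equivalence class, the factor $\lambda_{SR}$ is the same, so the anomalous sign must come from some class $[H]$ whose coefficient $\Lambda([H])$ is non-zero and inside which at least one hooping $H'$ contributes a signed term $\sigma(H')\lambda_{RS}(H')$ contradicting the all-circuits-negative pattern. Decomposing $H'$ into its disjoint cycles then forces the presence of at least one positive circuit $C$ in $H'$, for otherwise $\sigma(H')\lambda_{RS}(H')$ would carry the expected sign. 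Using the one-to-one correspondence, spelled out just before the theorem, between species hamiltonian hoopings of the DSR graph and hamiltonian hoopings of the reaction-labelled influence multigraph, $H'$ lifts to a hooping $H$ in $G$ of which $C$ is a subcycle, establishing condition~(1) and the subcycle part of condition~(2).

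It then remains to identify $\Lambda([H]) \neq 0$ with the full-rank condition on $(Y' - Y)|_H$. The key observation is that all hoopings in a given equivalence class share the same species-to-reaction arcs, hence involve the same multiset of reactions; the sum $\sum_{H' \in [H]} \sigma(H')\,\lambda_{RS}(H')$ is then, up to a global sign, the determinant of the square submatrix of $Y' - Y$ whose rows are indexed by the species in $I$ and columns by those reactions. Hence $\Lambda([H]) \neq 0$ is equivalent to $(Y' - Y)|_H$ having full rank, which completes condition~(2).

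The hard part will be the sign bookkeeping in the first step: Soulé's theorem only asserts that \emph{some} principal minor behaves anomalously at some $a$, and one must show that within the Kaltenbach decomposition of that specific minor the anomaly cannot be absorbed purely by hoopings whose every cycle is negative. Equivalently, one must verify that the combinatorial signs $\sigma(H')$ appearing in $\Lambda([H])$ match exactly the Leibniz signs in the determinant of the stoichiometric submatrix, so that class-wise non-vanishing genuinely forces full rank rather than merely being suggestive of it. This alignment of permutation signatures across the two combinatorial structures (cycle decompositions of hoopings on one side, column permutations of $Y'-Y$ on the other) is where the refinement really adds strength over the plain statement of Soulé's theorem.
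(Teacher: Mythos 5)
Your proposal follows essentially the same route as the paper, which does not prove Theorem~\ref{thm:fullrank} in detail but imports it from~\cite{Soliman13bmb} with exactly this sketch: apply Soul\'e's argument to obtain a principal minor of $-J$ with an anomalous sign at some point $a$, expand that minor via Theorem~\ref{thm:kaltenbach}, extract a positive circuit from a wrongly-signed hooping in a class with $\Lambda([H])\neq 0$, and identify $\Lambda([H])$ with a minor of $Y'-Y$ so that non-vanishing gives the full-rank condition. Your closing remark correctly locates the only delicate point (matching the permutation signatures $\sigma(H')$ with the Leibniz signs of the stoichiometric minor), which is precisely what is carried out in~\cite{Soliman13bmb}.
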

\hl{This theorem gives several graphical requirements for multistationarity.} 

\begin{corollary}\label{cor:once}

   A necessary condition for the multistationarity of a biochemical reaction
   system is that there exists a positive cycle in its influence \hl{multigraph}, using
   \emph{at most once each reaction}.

\end{corollary}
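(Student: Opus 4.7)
The plan is to derive the corollary directly from Theorem~\ref{thm:fullrank}, using the full-rank hypothesis on $(Y'-Y)|_H$ to do the main combinatorial work.

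First, assuming multistationarity, Theorem~\ref{thm:fullrank} gives a point $a$, a positive circuit $C$ in the reaction-labelled influence multigraph $G$ at $a$, and a hooping $H$ of $G$ containing $C$ as a subcycle, with $(Y'-Y)|_H$ of full rank. Using the one-to-one correspondence established just before the theorem, I would lift $H$ to a species hamiltonian hooping $\widetilde H$ of the DSR graph, i.e.\ a collection of alternating species/reaction cycles covering each of the $p$ species nodes involved in $H$ exactly once.

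Next I would count reaction visits in $\widetilde H$: since each alternating cycle of length $2\ell$ visits $\ell$ species nodes and $\ell$ reaction nodes, the hooping $\widetilde H$ contains exactly $p$ reaction-node traversals in total. The restricted stoichiometric matrix $(Y'-Y)|_H$ is accordingly of shape $p \times q$, where $q$ denotes the number of \emph{distinct} reactions appearing in $H$, so that $q \leq p$. The full-rank hypothesis forces $q = p$, meaning that every reaction is used exactly once in $H$, and therefore at most once in the positive subcycle $C$.

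The main obstacle is to knit these three levels cleanly together: the reaction labels on arcs of $C$ in the influence multigraph, their lift to reaction-node visits in the DSR hooping $\widetilde H$, and the columns of $(Y'-Y)|_H$ indexed by the distinct reactions of $H$. Once the identification ``same reaction label on two arcs $\Leftrightarrow$ same DSR reaction node visited twice $\Rightarrow$ one fewer distinct column in $(Y'-Y)|_H$'' is stated precisely, the inequality $q \leq p$ is automatic from the alternating structure of DSR cycles, and the full-rank hypothesis closes the argument with no further analytic input beyond what is already packaged in Theorem~\ref{thm:fullrank}.
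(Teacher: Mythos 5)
Your argument is correct and is essentially the proof of this corollary given in the cited source~\cite{Soliman13bmb}, which the paper itself does not spell out: a hooping covering $p$ species has exactly $p$ reaction-labelled arcs (one outgoing arc per species), so a repeated reaction label would leave only $q<p$ distinct columns in $(Y'-Y)|_H$ and contradict the full-rank condition of Theorem~\ref{thm:fullrank}. The only point you should state explicitly is that ``full rank'' must be read as rank equal to the number of species covered by $H$ (full \emph{row} rank) --- a $p\times q$ matrix with $q<p$ can still have full column rank --- and note that your detour through the DSR graph is equivalent to counting the arcs of the hooping directly in the influence multigraph.
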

This condition actually only requires the reaction-labelled influence \hl{multigraph}.
It is immediate to check that the mutual inhibition resulting from bimolecular
reactions---like that between \lstinline|E| and \lstinline|S| in our running
example---cannot fulfill these necessary conditions, since the
same reaction---$R_1$ in Fig.~\ref{fig:labelled}---will be repeated twice.

\begin{corollary}\label{cor:reversible}

   A necessary condition for the multistationarity of a biochemical reaction
   system is that there exists a positive cycle in its influence \hl{multigraph}, not
   using \emph{both forward and backward directions} of any reversible
   reaction.

\end{corollary}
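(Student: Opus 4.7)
The plan is to obtain this as a direct consequence of Theorem~\ref{thm:fullrank}, using the observation that a reversible reaction and its reverse contribute exactly opposite columns to the stoichiometric matrix, so that their simultaneous appearance destroys the full-rank condition.

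First I would invoke Theorem~\ref{thm:fullrank}: if the system admits two nondegenerate zeroes, then at some point $a$ the reaction-labelled influence multigraph $G$ contains a positive circuit $C$ which is a subcycle of some hooping $H$ with $(Y'-Y)|_H$ of full rank. My aim is to rule out, by contradiction, the possibility that $C$ uses both the forward reaction $R_j$ and its backward reaction $R_{-j}$ for some reversible reaction of the system.

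Next I would argue that if both $R_j$ and $R_{-j}$ label arcs of $C$, then both appear as reaction labels among the arcs of the hooping $H$, simply because $C$ is a subcycle of $H$. In the restricted stoichiometric matrix $(Y'-Y)|_H$, the columns indexed by $R_j$ and $R_{-j}$ would then both be present. By the very definition of a reversible reaction, the stoichiometric coefficient vectors satisfy
\[
y'_{\cdot,-j} - y_{\cdot,-j} = -\bigl(y'_{\cdot,j} - y_{\cdot,j}\bigr),
\]
so the two columns are linearly dependent, forcing a rank drop. This contradicts the full-rank hypothesis on $(Y'-Y)|_H$ supplied by Theorem~\ref{thm:fullrank}, and completes the argument.

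The main obstacle, such as it is, is purely bookkeeping: one must be careful that the notion of ``reversible reaction'' used in the statement matches the formalism of the paper, in which SBML reversible reactions have been split into a forward and a backward reaction sharing the same kinetic support but with opposite net stoichiometries. Once this is made explicit, the column-opposition identity above is immediate and the rank obstruction follows without further computation. No genuinely new technique beyond Theorem~\ref{thm:fullrank} is needed; the corollary is essentially a specialisation of the full-rank condition to the structurally forced linear dependence produced by any pair of opposite reactions.
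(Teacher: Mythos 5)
Your argument is correct and is essentially the justification the paper intends: Corollary~\ref{cor:reversible} is presented as an immediate consequence of condition~2 of Theorem~\ref{thm:fullrank}, since a circuit using both $R_j$ and $R_{-j}$ forces two opposite (hence linearly dependent) columns in $(Y'-Y)|_H$, contradicting the full-rank requirement. The bookkeeping point you raise about reversible reactions being split into forward and backward reactions with opposite net stoichiometry matches exactly how the paper sets up its formalism.
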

The mutual activation resulting from reversible
reactions---like that between \lstinline|ES| and \lstinline|S| through $R_1$
and $R_{-1}$ in our running example---cannot thus fulfill these necessary conditions.
That condition is also a corollary of the conditions given in \cite{BC10aam}
since, in their setting,  reversible reactions give rise to a unique undirected edge.

Another information that can be extracted from the stoichiometry is the
(structural) conservation laws, i.e., P-invariants of the underlying Petri
net, or more simply the left kernel of the stoichiometry matrix. Finding all
the conservation laws of a biochemical model might be computationally
expensive, though in practice that does not seem to be the
case~\cite{Soliman12amb}.

\begin{corollary}\label{cor:pinv}

   A necessary condition for the multistationarity of a biochemical reaction
   system is that there exists a positive cycle in its influence \hl{multigraph}, not
   using \emph{all species involved in a conservation law}.

\end{corollary}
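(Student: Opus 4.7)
The plan is to argue by contradiction, invoking Theorem~\ref{thm:fullrank} as a black box. If the system admits two nondegenerate zeroes, that theorem provides a positive cycle $C$ together with a hooping $H$ containing $C$ as a subcycle and such that $(Y' - Y)|_H$ has full rank. I would suppose, for contradiction, that the set of species visited by $C$ contains the entire support of some conservation law $v$, and deduce that $(Y'-Y)|_H$ cannot have full rank after all.

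First I would let $S_0$ denote the set of species visited by the hooping $H$. Since $C$ is a subcycle of $H$, every species of $C$ lies in $S_0$, and consequently $\operatorname{supp}(v) \subseteq S_0$. The matrix $(Y'-Y)|_H$ has rows indexed by $S_0$ and columns indexed by the reactions appearing in $H$; because cycles in the bipartite DSR-graph alternate between species and reaction nodes, these two index sets have the same cardinality, so $(Y'-Y)|_H$ is square and ``full rank'' is synonymous with ``invertible''.

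Next I would translate the defining identity of a conservation law, $v^\top(Y'-Y) = 0$, into a relation on $(Y'-Y)|_H$. Because the support of $v$ sits inside $S_0$, the restriction $v|_{S_0}$ is a well-defined nonzero vector, and projecting the identity onto the columns corresponding to reactions in $H$ yields $v|_{S_0}^\top\,(Y'-Y)|_H = 0$. Thus $(Y'-Y)|_H$ has a nontrivial left kernel and so cannot be invertible, the desired contradiction with Theorem~\ref{thm:fullrank}.

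The only delicate point in this plan is the bookkeeping of the restriction $|_H$: one has to agree that rows are restricted to the species $S_0$ actually covered by $H$, so that a conservation law supported inside $S_0$ genuinely obstructs invertibility (and, symmetrically, that a $v$ with support \emph{larger} than $S_0$ gives no such obstruction, which is exactly why the conclusion must talk about cycles that use \emph{all} species of some conservation law). Once this convention is in place the kernel argument is immediate, and no analytical property of the vector field $F$ beyond Theorem~\ref{thm:fullrank} is invoked.
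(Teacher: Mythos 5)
Your argument is correct and is exactly the intended derivation: the paper states Corollary~\ref{cor:pinv} without a written proof, as a direct consequence of Theorem~\ref{thm:fullrank} taken from~\cite{Soliman13bmb}, and the standard justification is precisely your observation that a conservation law $v$ supported inside the species covered by $H$ restricts to a nonzero left-kernel vector of the square matrix $(Y'-Y)|_H$, contradicting its full rank. The one caveat, which you already flag as the delicate bookkeeping point, is that the columns of $(Y'-Y)|_H$ should be read as indexed by the reaction \emph{occurrences} along the hooping (with multiplicity) rather than by distinct reactions, so that the matrix is genuinely square even before Corollary~\ref{cor:once} is invoked; with that convention your kernel argument goes through unchanged.
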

In our running example,
\hl{the species }\lstinline|E| and \lstinline|ES|\hl{, mutually activated through $R_1$
and $R_{2}$, form a conservation law,}
which violates the necessary conditions for multistationarity.
The three Corollaries~\ref{cor:once},~\ref{cor:reversible} and~\ref{cor:pinv}
thus rule out all the cases for which Thomas's condition was
satisfied in this example.

Furthermore, in~\cite{Soule03complexus} and later in~\cite{Soliman13bmb}, the two following
corollaries relying on some particular graph rewirings were also mentionned without much clue to check them:

\begin{corollary}\label{cor:inversed}

   A necessary condition for the multistationarity of a biochemical reaction
   system is that there exist positive cycles fulfilling condition 2 of
   Theorem~\ref{thm:fullrank} in the influence \hl{multigraph} corresponding to its
   Jacobian, \textbf{and} in any graph obtained from it choosing a set of
   species and by reversing the sign of all arcs that have as target some
   species belonging to that set.

\end{corollary}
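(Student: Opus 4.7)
The plan is to deduce the corollary by applying Theorem~\ref{thm:fullrank} not to $F$ itself but to a family of auxiliary maps indexed by the chosen set of species. Fix any subset $I \subseteq \{1,\dots,n\}$ and let $D$ be the diagonal sign matrix with $D_{ii} = -1$ if $i \in I$ and $D_{ii} = +1$ otherwise. Define $\hat F := D F$, i.e. the map obtained from $F$ by flipping the sign of the $i$-th component for every $i \in I$. Since $D$ is invertible, $\hat F$ has exactly the same zeros as $F$ in $\Omega$, and the Jacobian of $\hat F$ at $a$ is $D \cdot J(a)$, whose kernel coincides with that of $J(a)$; hence the two non-degenerate zeros of $F$ provided by the hypothesis are also two non-degenerate zeros of $\hat F$.

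Next I would check that $\hat F$ can be viewed as the vector field of a biochemical reaction system in the sense of the paper's formalism, so that Theorem~\ref{thm:fullrank} is applicable. This amounts to keeping the same rate laws $v_j$ and the same set of reactions, but replacing the stoichiometric matrix $Y' - Y$ by $D(Y'-Y)$; to match the required form $y'_{ij} - y_{ij}$ with non-negative $y, y'$, one splits each entry of $D(Y'-Y)$ canonically into its positive and negative parts. The key observation is that the Jacobian of $\hat F$ is $DJ$, so its reaction-labelled influence multigraph is precisely the graph $T_I(G)$ obtained from $G$ by flipping the sign of every arc whose target lies in $I$: for an arc $j \to i$ carrying sign $\mathrm{sgn}(J_{ij})$, the rewired sign is $\mathrm{sgn}(D_{ii} J_{ij})$, which is reversed exactly when $i \in I$.

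Applying Theorem~\ref{thm:fullrank} to $\hat F$ then yields a point $a_I \in \Omega$ such that the influence multigraph of $\hat F$ at $a_I$ — namely the rewired graph $T_I(G)$ — contains a positive circuit $C_I$, and such that there exists a hooping $H_I$ containing $C_I$ as a subcycle with $D(Y'-Y)|_{H_I}$ of full rank. Because left-multiplication by the invertible sign matrix $D$ restricted to the rows indexed by $H_I$ only flips signs of rows, it preserves rank, so $(Y'-Y)|_{H_I}$ is itself of full rank. Thus condition~2 of Theorem~\ref{thm:fullrank} in its original form transfers back to the reaction network and the rewired graph, which is the content of the corollary. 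Since $I$ was arbitrary, the conclusion holds simultaneously for every target-set rewiring.

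The main obstacle I anticipate is the second step: explicitly arguing that $\hat F$ still qualifies as a biochemical reaction system in the strict sense required by Theorem~\ref{thm:fullrank}, since negating one component of $F$ is not a physically meaningful chemical operation. Once this formal point is absorbed by choosing the split of $D(Y'-Y)$ into non-negative reactant/product matrices with the same rate laws $v_j$, the rest of the argument is essentially bookkeeping on how the sign matrix $D$ interacts with Jacobians, cycle signs, and rank of the stoichiometry restricted to a hooping.
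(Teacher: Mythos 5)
Your argument is correct and is exactly the intended one: the paper states this corollary without its own proof, deferring to Soul\'e and Soliman, where it is obtained precisely by applying the main theorem to $D\circ F$ for a diagonal signature matrix $D$ — the zeros, their non-degeneracy, and the rank of $(Y'-Y)|_H$ are preserved, while the Jacobian becomes $DJ$, whose labelled influence multigraph is the target-rewired graph. Your explicit handling of the only delicate point, namely that $D\circ F$ still ``corresponds to a biochemical reaction system'' (same rate laws $v_j$, stoichiometry $D(Y'-Y)$ split into non-negative parts), is the right way to close that step, since Theorem~\ref{thm:fullrank} places no well-formedness restriction on the rates or stoichiometry.
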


\begin{corollary}\label{cor:permuted}

   A necessary condition for the multistationarity of a biochemical reaction
   system is that there exist positive cycles fulfilling condition 2 of
   Theorem~\ref{thm:fullrank} in the influence \hl{multigraph} corresponding to its
   Jacobian, \textbf{and} in any graph obtained from it by choosing a permutation
   of the species and by rewiring the arcs' target according to the
   permutation.

\end{corollary}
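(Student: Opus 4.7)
The plan is to reduce the corollary to Theorem~\ref{thm:fullrank} applied to a suitably permuted version of the map $F$. Given a permutation $\pi$ of $\{1,\dots,n\}$, I would define $G\colon\Omega\to\mathbb{R}^n$ by $G_i(x) = F_{\pi(i)}(x)$. Then $a\in\Omega$ is a zero of $G$ if and only if it is a zero of $F$, and the Jacobian satisfies $J_G(a)=P_\pi J_F(a)$, where $P_\pi$ is the permutation matrix associated to $\pi$. In particular $\det J_G(a) = \operatorname{sgn}(\pi)\det J_F(a)$, so nondegeneracy of zeros is preserved. Hence if $F$ admits two nondegenerate zeros in $\Omega$, so does $G$.

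Next I would verify that the reaction-labelled influence multigraph of $G$ is precisely the graph obtained from that of $F$ by rewiring each arc's target via the permutation $\pi^{-1}$ (which ranges over all permutations as $\pi$ does). Indeed, since $(J_G)_{i,j} = (J_F)_{\pi(i),j}$, every nonzero contribution $\frac{\partial v_k}{\partial x_j}(y'_{i,k} - y_{i,k})$ to the entry $(J_F)_{i,j}$ now sits in row $\pi^{-1}(i)$ of $J_G$, producing an arc from $S_j$ to $S_{\pi^{-1}(i)}$ in the influence multigraph of $G$ with the same sign and reaction label as the original arc from $S_j$ to $S_i$ in the influence multigraph of $F$.

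Applying Theorem~\ref{thm:fullrank} to $G$ then yields a point $a$ together with a positive circuit and a full-rank hooping in the rewired influence multigraph; a direct application of the theorem to $F$ itself gives the condition for the original graph. Taken together over all choices of $\pi$, these supply the claimed family of necessary conditions.

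The main obstacle is justifying that Theorem~\ref{thm:fullrank} is applicable to $G$, since $G$ need not literally be of the reaction form $\sum_k v_k(y'_{\cdot,k}-y_{\cdot,k})$ coming from a reaction network. However, the proof of Theorem~\ref{thm:fullrank} factors through Kaltenbach's determinant expansion (Theorem~\ref{thm:kaltenbach}), which only requires that every nonzero Jacobian entry decompose as a sum of products $\frac{\partial v_k}{\partial x_j}(y'_{i,k}-y_{i,k})$. This algebraic structure is preserved verbatim by the row permutation: the entries of $J_G$ are sums of exactly the same atomic products, merely with their target species indices shifted by $\pi^{-1}$. Consequently Kaltenbach's expansion applies to $J_G$ on the correspondingly rewired DSR graph, the Soulé--Soliman negativity-of-a-term argument goes through unchanged, and both conclusions of Theorem~\ref{thm:fullrank} transfer to the rewired multigraph, completing the reduction.
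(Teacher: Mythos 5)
Your proposal is correct and is essentially the intended argument: the paper states this corollary without proof, deferring to~\cite{Soule03complexus} and~\cite{Soliman13bmb}, where the justification is exactly your reduction --- compose $F$ with a permutation of the target coordinates, observe that zeros and their nondegeneracy are preserved (the Jacobian determinant only changes by $\operatorname{sgn}(\pi)$), that the rows of the Jacobian are permuted so the influence multigraph is the target-rewired one, and apply Theorem~\ref{thm:fullrank} to the permuted system. Your closing remark that the permuted map still carries the reaction decomposition needed for Kaltenbach's expansion (with stoichiometric matrix $P_\pi(Y'-Y)$, whose restricted ranks agree with those of $(Y'-Y)$) correctly disposes of the only delicate point.
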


\section{Graph-Theoretic Algorithm for Proving Non-Multistationarity}\label{sec:alg}

\subsection{{Computing the Labelled Influence Multigraph of a Reaction Model written in SBML}}

The signs of the arcs in the reaction-labelled influence
multigraph of a reaction system, are given by the sign of $\partial v_i/\partial x_j$ instead of that
of $\partial f_i/\partial x_j$. Even without precise kinetic values,
this can be easily computed under the general condition of well-formedness of the reactions~\cite{FMRS18tcbb,FS08tcs}.
This condition is satisfied by the commonly used kinetics such as
mass action law, Michaelis-Menten and
Hill kinetics, \hl{and provides a sanity check for the writing in SBML of ODE models~}\cite{FGS15tcs}.

In the following, and to ensure in a simple and systematic way
that the structure of the reactions, and of the computed influence \hl{multigraph}, do correspond to the continuous dynamics of the model, 
all SBML models considered here are first automatically \emph{sanitized} as explained
in~\cite{FGS15tcs}, by exporting the system of ordinary
differential equations, and reimporting it as a well-formed reaction system.
Algorithm~\ref{alg:sanitize} summarizes the main steps of this procedure.

\begin{algorithm}[htb]
\begin{algorithmic}
   \Function{extract\_labelled\_influence\_graph}{$sbmlModel$}
   \State $Model\gets$\Call{load\_sbml\_model}{$sbmlModel$}
   \State $System\gets$\Call{compute\_odes}{$Model$}
   \State $Model\gets$\Call{infer\_reaction\_model\_from\_odes}{$System$}\\
   \Comment{as explained in~\cite{FGS15tcs}}
   \State $Graph\gets$\Call{infer\_influence\_graph}{$Model$}\\
   \Comment{as explained in~\cite{FS08fmsb} but adding reactions as labels on the edges}
   \State \Return $Graph$
   \EndFunction
\end{algorithmic}
\caption{Algorithm for computing the labelled influence multigraph of a reaction model \hl{written} in SBML~\cite{FGS15tcs}.}\label{alg:sanitize}
\end{algorithm}

\hl{This algorithm needs to determine the sign of a partial derivative.
In our implementation this is done by a simple symbolic derivation algorithm and a heuristic to determine the sign of the expressions.
In case of indeterminacy, both signs are assumed.
In general, the result that is computed is thus an over-approximation of the real influence multigraph.}

\subsection{Absence of Positive Circuit with the Conditions of Cor.~\ref{cor:once}~\ref{cor:reversible}~\ref{cor:pinv}}

Tarjan's \hl{depth-first tree traversal of a graph} provides a classical algorithm for testing the existence of a circuit,
\hl{by just checking the existence of a back edge during this traversal~}\cite{Tarjan72siam}.
\hl{Generalizing this algorithm to check the absence of circuits 
satisfying the previous conditions on the signs and on the reactions at the origin of the arcs
is however non obvious. This may explain why the previous refined graphical requirements had not been implemented before.

In this section, we present an algorithm which proceeds by graph rewriting.
This algorithm will generalize the following graph simplification rules which show that a graph is acyclic
if and only if it reduces to the empty graph by using them}~\cite{ll88joa,pqr99jco,FL06cor}:
\begin{itemize}

   \item \textbf{IN0($v$)}: Remove vertex $v$ and all associated edges if $v$
      has no incoming edge.

   \item \textbf{OUT0($v$)}: Remove vertex $v$ and all associated edges if $v$
      has no outgoing edge.

   \item \textbf{IN1($v$)}: Remove vertex $v$ if $v$ has exactly one incoming
      edge and connect this edge to all the outgoing edges of $v$.

   \item \textbf{OUT1($v$)}: Remove vertex $v$ if $v$ has exactly one outgoing
      edge and connect all incoming edges to it.

\end{itemize}

In order to check the conditions of Cor.~\ref{cor:once}~\ref{cor:reversible}~\ref{cor:pinv}, 
we consider here labelled \hl{multigraph}s, where each arc is labelled by a couple: its sign
and the reaction from which it originates.
Instead of stopping when the previous rules do not apply, and conclude to the cyclicity of the graph if it is not empty,
we extend this set of rules to any number of incoming or outgoing edges and add a restriction on the created edges that must satisfy the conditions of the corollaries. 
\hl{This is described by the following single graph rewriting rule which subsumes} the four previous ones:

\begin{itemize}
   \item \textbf{INOUTi($v$)}: Remove vertex $v$ if $v$ has exactly $i$ incoming or outgoing
      edges, and create the incoming-outcoming edges labeled by the product of the signs and the union of the reactions
if, and only if, those labels satisfy the conditions of the corollaries.
\end{itemize}
This \hl{generic rewriting rule removes} one vertex and all the attached edges
and creates a new edge for every pair of \textit{incoming-outgoing} edge of
the vertex satisfying the conditions. When creating such arcs, the reactions
and species involved in the process \hl{are memorized in order} to check the conditions given by
the previous corollaries and to eliminate the edges, now representing paths,
that do not respect them.
\hl{In this way, this rewriting rule} preserves
all circuits satisfying the conditions of the corollaries.

This rule is applied successively to the nodes of the graph \hl{by choosing a vertex of minimum degree $i$ at each step.
This is done with a simple data structure that maintains the degree of each vertex.}
This algorithm terminates when the first positive self-loop is found, denoting
that a positive circuit satisfying the conditions of the three
Corollaries~\ref{cor:once},~\ref{cor:reversible} and~\ref{cor:pinv}
has been found in the original graph, or when the graph is empty, proving that
no such circuit exists.
The main steps of the decision procedure are summarized in Alg.~\ref{alg:multistat} and~\ref{alg:red}.

\begin{algorithm}[htb]
\begin{algorithmic}[5]
   \Function{CheckAcyclicity}{$G$}
      \While{\Call{CountVertices}{$G$} $> 0$}
         \State $v\gets$ vertex with the least number of incoming or outgoing arcs
         \State \Call{RemoveVertex}{$v$}
         \For{$L \in \mathit{SelfLoops}$}\label{algline:forloop}
            \If{$L$ is positive}
               \State \Return False
            \Else
               \State Delete $L$
            \EndIf
         \EndFor
      \EndWhile
      \State \Return True
   \EndFunction
\end{algorithmic}
\caption{Acyclicity check.}\label{alg:multistat}
\end{algorithm}

\begin{algorithm}[htb]
\begin{algorithmic}
   \Procedure{RemoveVertex}{$v$}
      \For{$w\lra_l v \in IncomingEdges(v)$}
         \For{$v\lra_m x \in OutgoingEdges(v)$}
            \State Create a new label $n\gets l\cdot v\cdot m$
            \If{$n$ does not contain twice the same reaction, a reaction and
            its inverse or all species of a conservation}
            \State Create $w\lra_n x$
            \Else
               \State Discard $n$
            \EndIf
            \State Delete $v\lra_m x$
         \EndFor
         \State Delete $w\lra_l v$
      \EndFor
      \State Delete $v$
   \EndProcedure
\end{algorithmic}
\caption{Graph reduction preserving acyclicity.}\label{alg:red}
\end{algorithm}

\begin{proposition}\label{prop:complexity}

The time complexity of Alg.~\ref{alg:multistat} is $\mathcal{O}\left(k^{2^n}\right)$ where $n$ the number of nodes
and $k$ is the maximum degree of the graph.

\end{proposition}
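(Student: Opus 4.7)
The plan is to bound the total work by tracking how the maximum in- or out-degree of the working multigraph evolves under successive applications of the rule \textsc{RemoveVertex}. Let $k_t$ denote the maximum in- or out-degree of the graph after $t$ removals, with $k_0 \le k$, and let me show by induction that $k_t$ can grow at most quadratically at each step, yielding $k_t \le k^{2^t}$.

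The first step is to analyse one call to \textsc{RemoveVertex}$(v)$. For each in-neighbour $w$ of $v$, every parallel edge $w\to v$ gets composed with every edge $v\to x$ to create a candidate new edge $w \to x$. Because we work with a multigraph, if $e_{wv}$ denotes the multiplicity of edges from $w$ to $v$, the contribution of $v$'s removal to the out-degree of $w$ is at most $e_{wv} \cdot d_{v,\text{out}}$, replacing the $e_{wv}$ deleted edges $w\to v$. Bounding everything by $k_t$ gives a new out-degree of at most $d_{w,\text{out}} - e_{wv} + e_{wv}\,d_{v,\text{out}} \le k_t^2 + k_t$. A symmetric argument on in-degrees of the out-neighbours of $v$ gives the same bound, so $k_{t+1} \le k_t^2 + k_t \le 2k_t^2$. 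Unfolding the recurrence and absorbing the polynomial factors into the exponent yields $k_t = O(k^{2^t})$.

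Next I would bound the cost of iteration $t$. The double loop in \textsc{RemoveVertex} enumerates pairs (incoming edge of $v$, outgoing edge of $v$), and therefore performs $O(k_t^2)$ label compositions and edge creations. The label check against the three conditions of Corollaries~\ref{cor:once}, \ref{cor:reversible} and \ref{cor:pinv} reduces to testing membership of a reaction in a set, testing whether a reaction and its reverse are both present, and testing whether a set of species covers some conservation law; with the reactions, their inverses, and the conservation laws precomputed once at the start, each test is polynomial in the fixed parameters $m$ (number of reactions) and the number of conservation laws, and can be treated as constant with respect to $k$ and $n$. The minimum-degree selection at the top of the loop can be maintained, for instance, by a priority structure keyed on the current degree, at a cost dominated by the edge-creation cost. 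Summing the per-iteration cost $O(k_t^2)$ over $t \le n$ yields $\sum_{t=0}^{n} O(k^{2 \cdot 2^t})$, which is dominated by its last term and gives the stated bound $\mathcal{O}(k^{2^n})$.

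The main obstacle is the multigraph bookkeeping: the doubly-exponential bound arises precisely because parallel edges both multiply and are multiplied at each reduction step, so one has to be careful that the squaring estimate on degrees truly accounts for the blow-up of multiplicities and not just of the underlying simple graph. Note that this is a worst-case bound: in practice, the label-based filtering in \textsc{RemoveVertex} discards most candidate edges, which is what makes the algorithm perform so well on the BioModels benchmark despite the pessimistic theoretical bound.
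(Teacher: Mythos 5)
Your proof follows essentially the same route as the paper's: track the maximum in-/out-degree $k_t$ after $t$ removals, observe that one call to \textsc{RemoveVertex} costs $O(k_t^2)$ and at most squares the degree, unfold the recurrence to $k_t = O\left(k^{2^t}\right)$, and sum over at most $n$ iterations. Your treatment of the multigraph multiplicities and of the $+k_t$ additive term is in fact slightly more careful than the paper's bare assertion $k_{i+1}=k_i^2$; only the summation index needs the trivial adjustment that the $t$-th removal costs $O(k_{t-1}^2)$, so the last term is $k^{2^n}$ as claimed.
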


\begin{proof}

Let us write $k_i$ the maximum indegree or outdegree of the graph after the $i^{\text{th}}$ loop of Alg.~\ref{alg:multistat} ($k_0 = k$). The call to remove a vertex (Alg~\ref{alg:red}) is done in at most $k_i^2$ steps and creates at most the same number of edges, we then have the relation: $k_{i+1} = k_i^2$ which gives: $k_i = k^{2^i}$.

Alg.~\ref{alg:multistat} goes through at most $n$ loops, therefore, the number of steps to complete the algorithm is at most given by:

\[C(k,n) = \sum_{i=1}^{n}k^{2^i} \leqslant \sum_{j=1}^{2^{n-1}} k^{2j} = \mathcal{O} \left(k^{2^n}\right)\]

\end{proof}

\hl{We do not know whether this doubly exponential complexity can be reached in some networks.
It is worth noting that this bound does not take into account the fact that the number of edges strictly decreases
when the rule INOUTi is applied to vertices of degree $i\le 1$},
nor that the edges that do not satisfy the conditions of the corollaries are not created.
Furthermore the degree of the nodes in the initial graph is also a limiting factor as it is generally low \hl{in the context of biochemical networks~}\cite{NMFS16constraints}.
\hl{These considerations explain the much better practical complexity reported in Sections~}\ref{sec:biomodels}, and~\ref{futile}\hl{,
where we will show for instance that the time taken to analyze one model of BioModels is
empirically  $\mathcal{O}(e\log(n))$ where $e$ is the number of edges.}

\subsection{Sign Changes}

\hl{We show here that} the condition given by Corollary~\ref{cor:inversed} can be done by solving a linear system in 
\hl{Galois field} $GF(2)$, \hl{i.e.~$\mathbb Z/2\mathbb Z$}, in which each species is a variable (valued to 1 if the sign of the incoming arcs needs to be reversed). Each simple loop satisfying Corollaries~\ref{cor:once},~\ref{cor:reversible} and~\ref{cor:pinv} in the graph is then modelled by an equation on the sum of all the species involved in the loop, equal to 0 if the loop is negative and 1 otherwise.

As an example, let us consider the influence graph shown in Fig.~\ref{fig:mapk}. This graph contains two positive circuits which satisfy the three Corolaries~\ref{cor:once},~\ref{cor:reversible} and~\ref{cor:pinv} ($ K \xrightarrow[]{R_1} MK \xrightarrow[]{R_2} K $ and $ K \xrightarrow[]{R_3} MpK \xrightarrow[]{R_4} K $) and only one negative circuit satisfying the same Corollaries ($ K \xrightarrow[]{R_1} MK \xrightarrow[]{R_2} Mp \xrightarrow[]{R_3} K $). The system associated to these three loops is therefore:
\[
\left\lbrace\begin{array}{cccccccc}
             x_K &+& x_{MK} & &         & &        & = 1 \\
             x_K & &        &+& x_{MpK} & &        & = 1 \\
             x_K &+& x_{MK} & &         &+& x_{Mp} & = 0 
        \end{array}\right.
     \]
If this system has a solution, then the reaction graph for which we reverse the sign of every arc that has as target any species which variable evaluates to 1 in the solution, does not contain any positive loop satisfying the three Corollaries~\ref{cor:once},~\ref{cor:reversible} and~\ref{cor:pinv}.

\begin{figure}[htb]
   \begin{center}
      \begin{tikzpicture}
         \matrix[row sep=1.5cm,column sep=1.5cm] {%
			&  & \node[place] (K) {K}; &  &  &  \\
			\node[place] (M) {M}; &  &  &  & \node[place] (MpK) {MpK}; & \node[place] (Mpp) {Mpp}; \\
			\\
			& \node[place] (MK) {MK}; &  & \node[place] (Mp) {Mp}; &  &  \\
         };
        
        \draw[bend right=20,negreg] (K) to node[auto,black,above] {$R_1$} (M);
        \draw[bend right=20,negreg] (M) to node[auto,black] {$R_1$} (K);
        
        \draw[bend left=10,negreg] (Mp) to node[auto,black] {$R_3$} (K);
        \draw[bend left=10,negreg] (K) to node[auto,black] {$R_3$} (Mp);
        
        \draw[bend left=10,posreg] (M) to node[auto,black] {$R_1$} (MK);
        \draw[bend left=10,posreg] (MK) to node[auto,black] {$R_{-1}$} (M);
        
        \draw[posreg] (MK) to node[auto,black] {$R_{2}$} (K);
        \draw[bend left=20,posreg] (MK) to node[auto,black] {$R_{-1}$} (K);
        \draw[bend left=20,posreg] (K) to node[auto,black] {$R_{1}$} (MK);
        
        \draw[posreg] (MK) to node[auto,black] {$R_{2}$} (Mp);
        
        \draw[bend left=10,posreg] (Mp) to node[auto,black] {$R_{3}$} (MpK);
        \draw[bend left=10,posreg] (MpK) to node[auto,black] {$R_{-3}$} (Mp);
        
        \draw[posreg] (MpK) to node[auto,black,above] {$R_{-3}$} (K);
        \draw[bend right=20,posreg] (MpK) to node[auto,black,above] {$R_{4}$} (K);
        \draw[bend right=20,posreg] (K) to node[auto,black,below] {$R_{3}$} (MpK);
        
        \draw[posreg] (MpK) to node[auto,black] {$R_{4}$} (Mpp);
      \end{tikzpicture}
   \end{center}
   \caption{influence multigraph of $M + K \rightleftarrows MK \lra K + Mp
   \rightleftarrows MpK \lra K + Mpp$. Negative self-loops are omitted for clarity}\label{fig:mapk}
\end{figure}

Solving such system can be done by using a simple Gaussian elimination. This process is applied every time a new loop is found by adding the corresponding equation in the system and checking the new equation does not yield a contradiction (which can only be the equation $0 = 1$). This process allows us not to compute every possible loop in the graph if a contradiction emerges.

The previous system obtained from Fig.~\ref{fig:mapk} has two solutions, one of which is $x_K = x_{Mp} = 1$ and the over variables are put to zero. Therefore, the influence graph for which the sign of the incoming arcs for nodes $K$ and $Mp$ are reversed (Fig.~\ref{fig:mapk_sign}) does not contain any positive circuit satisfying Corollaries~\ref{cor:once},~\ref{cor:reversible} and~\ref{cor:pinv}. The biochemical system cannot display any multistationarity.

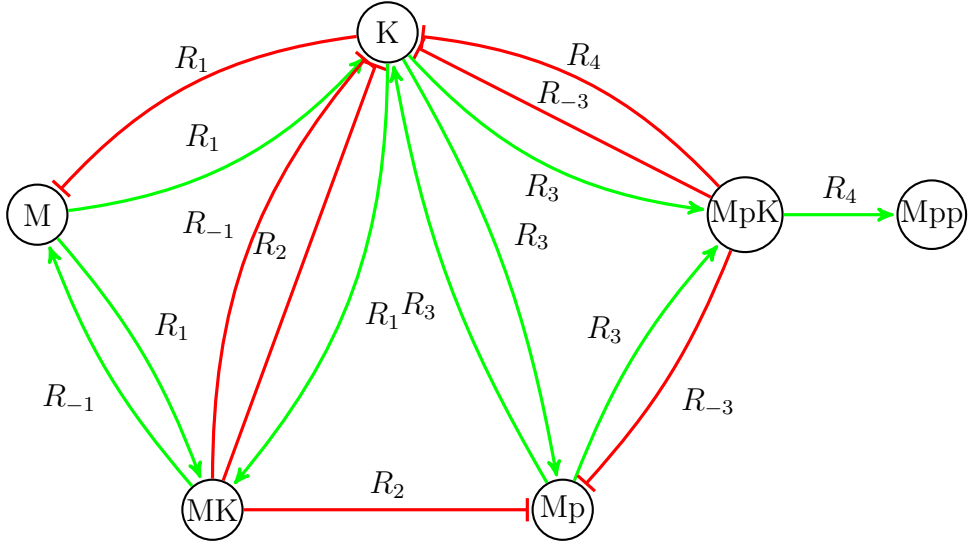
\begin{figure}[htb]
   \begin{center}
      \begin{tikzpicture}
         \matrix[row sep=1.5cm,column sep=1.5cm] {%
			&  & \node[place] (K) {K}; &  &  &  \\
			\node[place] (M) {M}; &  &  &  & \node[place] (MpK) {MpK}; & \node[place] (Mpp) {Mpp}; \\
			\\
			& \node[place] (MK) {MK}; &  & \node[place] (Mp) {Mp}; &  &  \\
         };
        
        \draw[bend right=20,negreg] (K) to node[auto,black,above] {$R_1$} (M);
        \draw[bend right=20,posreg] (M) to node[auto,black] {$R_1$} (K);
        
        \draw[bend left=10,posreg] (Mp) to node[auto,black] {$R_3$} (K);
        \draw[bend left=10,posreg] (K) to node[auto,black] {$R_3$} (Mp);
        
        \draw[bend left=10,posreg] (M) to node[auto,black] {$R_1$} (MK);
        \draw[bend left=10,posreg] (MK) to node[auto,black] {$R_{-1}$} (M);
        
        \draw[negreg] (MK) to node[auto,black] {$R_{2}$} (K);
        \draw[bend left=20,negreg] (MK) to node[auto,black] {$R_{-1}$} (K);
        \draw[bend left=20,posreg] (K) to node[auto,black] {$R_{1}$} (MK);
        
        \draw[negreg] (MK) to node[auto,black] {$R_{2}$} (Mp);
        
        \draw[bend left=10,posreg] (Mp) to node[auto,black] {$R_{3}$} (MpK);
        \draw[bend left=10,negreg] (MpK) to node[auto,black] {$R_{-3}$} (Mp);
        
        \draw[negreg] (MpK) to node[auto,black,above] {$R_{-3}$} (K);
        \draw[bend right=20,negreg] (MpK) to node[auto,black,above] {$R_{4}$} (K);
        \draw[bend right=20,posreg] (K) to node[auto,black,below] {$R_{3}$} (MpK);
        
        \draw[posreg] (MpK) to node[auto,black] {$R_{4}$} (Mpp);
      \end{tikzpicture}
   \end{center}
   \caption{influence multigraph of $M + K \rightleftarrows MK \lra K + Mp
   \rightleftarrows MpK \lra K + Mpp$ for which arcs ending in $K$ and $Mp$ have changed sign. Self-loops are omitted for clarity}\label{fig:mapk_sign}
\end{figure}

Gaussian elimination can be done directly on the species involved in the loop by noticing that adding two equations of the system in $GF(2)$ corresponds to taking the symmetric difference between the list of species of the two loops involved and changing the sign accordingly. This last process is described in Alg.~\ref{alg:gauss}. This function is to be
incorporated in Alg.~\ref{alg:multistat} by replacing the body of the
\textbf{for} loop starting on line~\ref{algline:forloop}, stopping on False
and continuing on True.

\begin{algorithm}[htb]
\begin{algorithmic}
   \Function{AddToLoopSystem}{$l$}
      \For{$m \in LoopSystem$}
         \State Let $P \gets Pivot(m)$
         \If{$P \in l$}
            \State $species(l) \gets species(l)\ \Delta\ species(m)$
            \State $Sign(l) \gets Sign(l) + Sign(m)$
         \EndIf
      \EndFor
      \If{$l$ is positive and $Species(l) = \emptyset$}
         \State \Return False
      \Else
         \State Add $l$ to $LoopSystem$
         \State $Pivot(l) \gets \mathit{FirstSpecies}(l)$
         \State \Return True
      \EndIf
   \EndFunction
\end{algorithmic}
\caption{Gaussian Elimination}\label{alg:gauss}
\end{algorithm}

\subsection{Permutations}

Checking the condition given by Corollary~\ref{cor:permuted} requires more computation as we do not know beforehand the effects of applying a permutation on the circuits of the graph.

As an example, if we apply a swapping between $K$ and $MpK$ in the previous case of Fig.~\ref{fig:mapk}, i.e.~changing the target of every edge that points to $K$ to $MpK$ (including self-loops) and vice-versa, we obtain the graph shown in Fig~\ref{fig:mapk_swap} for which there are 4 positive loops ($K \xrightarrow[]{R_1} MpK \xrightarrow[]{R_4} K$, $K \xrightarrow[]{R_1} MpK \xrightarrow[]{R_{-3}} K$, $K \xrightarrow[]{R_3} MpK \xrightarrow[]{R_4} K$ and $K \xrightarrow[]{R_1} MK \xrightarrow[]{R_2} Mp \xrightarrow[]{R_3} K$).
Changing the sign of $K$ would then transform those positive circuits into negative ones and once again rule out the possibility for multistationarity.
Note that because of the conditions of the original theorems
of~\cite{Soule03complexus} on the diagonal of the Jacobian, only the vertices that
have at least one incoming and one outgoing arc are considered for rewiring.
In other words \textbf{IN0} and \textbf{OUT0} are performed before
permutations.

\hl{Because we did not find any efficient algorithm to propagate target permutation constraints, 
we restricted ourselves to simple permutations made of \emph{one single swapping} between tow molecular species not eliminated by \textbf{IN0} and \textbf{OUT0},
and systematically tried beforehand in a generate-and-test manner.}

\begin{figure}[htb]
   \begin{center}
      \begin{tikzpicture}
         
         \matrix[row sep=2.5cm,column sep=2.5cm] {%
         \node[place] (M) {M};&&\node[place] (K) {K};&\\
         &\node[place] (Mp) {Mp};&&\\
         \node[place] (MK) {MK};&&\node[place] (MpK) {MpK};&\node[place] (Mpp) {Mpp};\\
         };
         
         \coordinate[shift={(-10mm,-2mm)}] (n) at (MK.south east);

		\draw[bend right=20,negreg] (K) to node[auto,black, below] {$R_1$} (MpK);
		\draw[negreg] (K) to node[auto,black, below] {$R_3$} (MpK);   
		\draw[bend right = 20, negreg] (MpK) to node[auto,black, above] {$R_4$} (K);
		\draw[bend right = 40, negreg] (MpK) to node[auto,black, above] {$R_{-3}$} (K);

        \draw[bend right=20,negreg] (K) to node[auto,black,above] {$R_1$} (M);
        \draw[bend right=45, negreg] (M) to node[auto,black] {} (n) to node[auto,black] {$R_1$} (MpK);
        
        \draw[bend left=10,negreg] (Mp) to node[auto,black] {$R_3$} (MpK);
        \draw[bend left=10,negreg] (K) to node[auto,black] {$R_3$} (Mp);
        
        \draw[bend left=10,posreg] (M) to node[auto,black] {$R_1$} (MK);
        \draw[bend left=10,posreg] (MK) to node[auto,black] {$R_{-1}$} (M);
        
        \draw[posreg] (MK) to node[auto,black, below] {$R_{2}$} (MpK);
        \draw[bend left=20,posreg] (MK) to node[auto,black] {$R_{-1}$} (MpK);
        \draw[bend right=30,posreg] (K) to node[auto,black, above] {$R_{1}$} (MK);
        
        \draw[posreg] (MK) to node[auto,black] {$R_{2}$} (Mp);
        
        \draw[bend left=10,posreg] (Mp) to node[auto,black] {$R_{3}$} (K);
        \draw[bend left=10,posreg] (MpK) to node[auto,black] {$R_{-3}$} (Mp);
        
        \draw[posreg] (MpK) to node[auto,black] {$R_{4}$} (Mpp);
      \end{tikzpicture}
   \end{center}
   \caption{influence multigraph of $M + K \rightleftarrows MK \lra K + Mp
   \rightleftarrows MpK \lra K + Mpp$ for which arcs ending in $K$ and $MpK$
have been swapped. Self-loops are omitted for clarity}\label{fig:mapk_swap}
\end{figure}
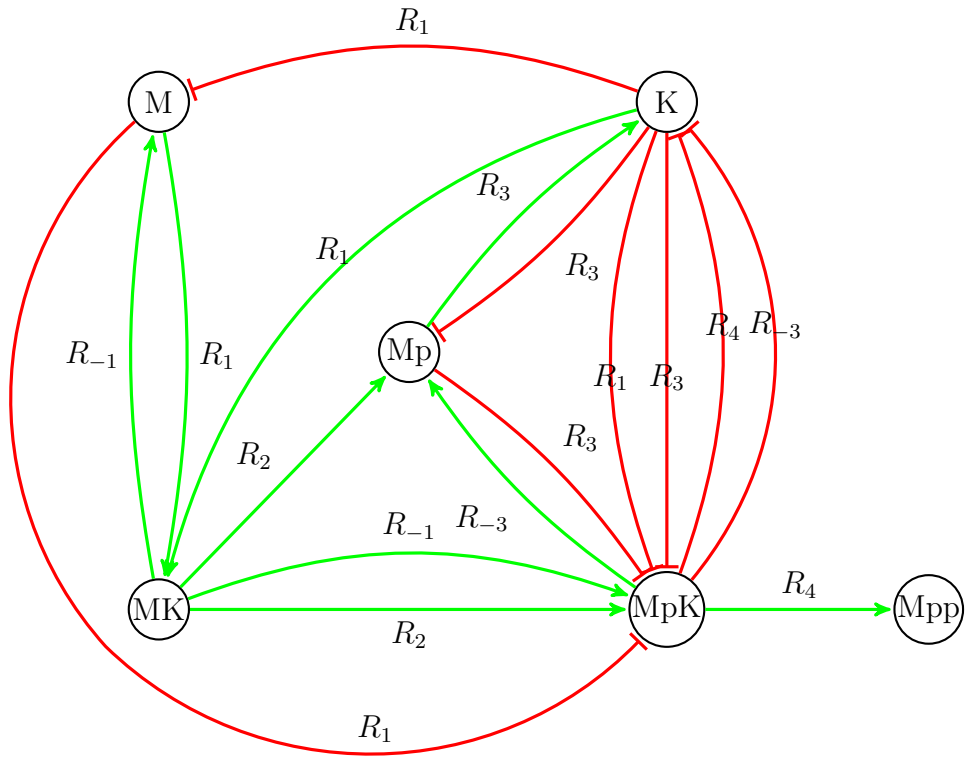

\section{Evaluation on the BioModels Repository}\label{sec:biomodels}

\subsection{Reaction Networks from BioModels}

To evaluate the information brought by \hl{the previous graphical requirements for multistationarity and the performance of our graph rewriting algorithms}, we downloaded the
latest release of the BioModels database\footnote{\url{http://biomodels.net/},
31st release, dated 26th of June, 2017}~\cite{CLN13issb} and applied our
method in a systematic way. First, a labelled influence \hl{multigraph} is extracted as
per Alg.~\ref{alg:sanitize}. Out of the 640 curated models, the extraction led
to 506 models with a non-trivial influence \hl{multigraph}. The other models rely on
events, assignment-rules, etc.\ to enforce their dynamics, or simply do not
contain reaction but flux-balance constraints, gene-regulations, etc.\ even
after re-import of their ODE dynamics~\cite{FGS15tcs}.

\subsection{Results of the Graphical Algorithm}

\begin{table}[htb]
   \begin{center}
      \begin{tabular}{lrrrrr}
         \toprule
         Conditions verified& Number & \multicolumn{2}{r}{Nb of species} & \multicolumn{2}{r}{Computation time}\\
          & of graphs & avg. & max. & avg. (s) & max (s)\\ \midrule
            All graphs & 506 & 21.24 & 430  &  & \\ \midrule
         \hl{No negative circuit} & \hl{70} & \hl{6.87} & \hl{57} & \hl{$< 0.01$} & \hl{$0.05$}\\ \midrule
         No positive circuit & 48 & 3.42 & 18 & $< 0.01$ & $<0.01$\\
         Cor.~\ref{cor:once}~\ref{cor:reversible}~\ref{cor:pinv} & 105 &
            6.22 & 46 & $< 0.01$ & $<0.01$\\
         Cor.~\ref{cor:once}~\ref{cor:reversible}~\ref{cor:pinv}~\ref{cor:inversed} & 160 &
            8.23 & 54 & $< 0.01$ & $0.05$\\
         Cor.~\ref{cor:once}~\ref{cor:reversible}~\ref{cor:pinv}~\ref{cor:inversed}~\ref{cor:permuted} & 180 & 8.38 & 54 & 5.90 & 980.1 \\
         \bottomrule
      \end{tabular}
   \end{center}
   \caption{Analysis of \hl{the 506 sanitized reaction models from the curated branch of
         the BioModels repository. The table reports the proportion of models, and their size, for which 
         the non-existence of multiple steady states is proved using Thomas's positive circuit condition and using
         the refined conditions expressed in the corollaries described above. 
   The computation times are given for the whole set of models. The maximum computation time is obtained for checking the last condition on model number 574.}\label{tab:biomodels}}
\end{table}

Table~\ref{tab:biomodels} summarizes the results of our experiments.
\hl{It is worth noting that the maximum running time of 50ms for checking our main graphical requirements is remarkably low.
   It concerns all models of the benchmark, including the largest model number 235 that contains 430 species
and an influence multigraph of 1875 arcs.}

\hl{Another} observation is that \hl{not only the
number of models for which the absence of multistationarity is proved}
more than doubles when using
Corollaries~\ref{cor:once}~\ref{cor:reversible}~\ref{cor:pinv} on top of
Thomas's simple condition, the models that are added are of \hl{much larger size} than the one dealt with the original conditions.
Indeed with the simple condition, only very small models
with less than 18 species and linear reactions were shown to have no multistationarity, whereas the stronger conditions
allow us to prove the absence of multistationarity in models of size up to 46 species and including non-linear reactions. This is far below the size of the biggest
models of the BioModels repository (for which the existence of multiple steady states is generally unknown) but shows that the supplementary conditions
do change the scope of use of the method. 

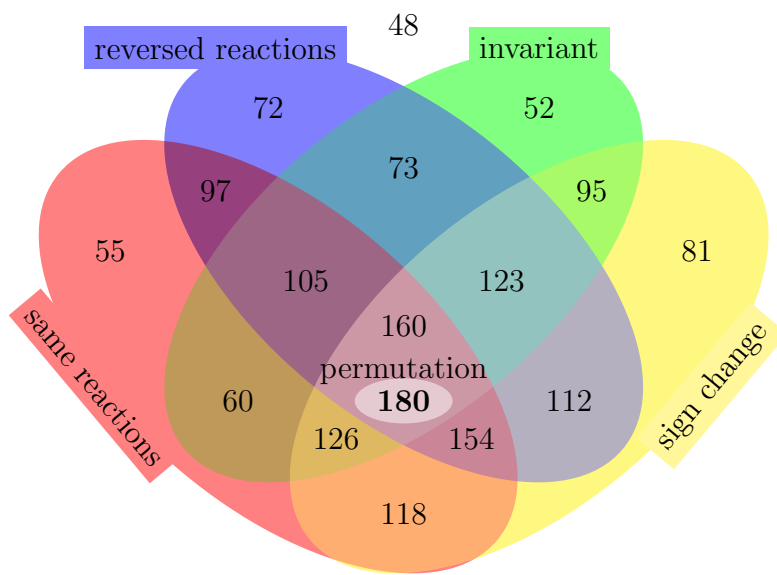
\begin{figure}[htb]
	\begin{center}
		\begin{tikzpicture}[scale=1]
		
			\begin{scope}[transparency group]
			\begin{scope}[blend mode=luminosity]
        
			\fill[opacity=0.5, line width=1pt, rotate=-40, red] (-0.5,-2) ellipse (3.8 and 2);
			\fill[opacity=0.5, line width=1pt, rotate=-40, blue] (0,0) ellipse (3.8 and 2);
			\fill[opacity=0.5, line width=1pt, rotate=40, green] (0,0) ellipse (3.8 and 2);
			\fill[opacity=0.5, line width=1pt, rotate=40, yellow] (0.5,-2) ellipse (3.8 and 2);
			\fill[opacity=0.5, line width=1pt, white] (0,-1.8) ellipse (0.65 and 0.3);
            
			\end{scope}
			\end{scope}
            
			\node (a) at (0,-0.8) {${160}$};
			\node (a) at (0,-1.8) {$\mathbf{180}$};
			\node (a) at (0.9,-2.3) {$154$};
			\node (a) at (-0.9,-2.3) {$126$};
			\node (a) at (0,-3.3) {$118$};
			\node (a) at (2.2,-1.8) {$112$};
			\node (a) at (-2.2,-1.8) {$60$};
			\node (a) at (1.3,-0.2) {$123$};
			\node (a) at (-1.3,-0.2) {$105$};
			\node (a) at (3.9,0.2) {$81$};
			\node (a) at (-3.9,0.2) {$55$};
			\node (a) at (0,1.3) {$73$};
			\node (a) at (2.5,1) {$95$};
			\node (a) at (-2.5,1) {$97$};
			\node (a) at (1.8,2.1) {$52$};
			\node (a) at (-1.8,2.1) {$72$};
			\node (a) at (0,3.2) {$48$};
            
			\node (a) at (0,-1.4) {permutation};
			\node[fill=red!50, rotate=-50] (a) at (-4.1,-1.6) {same reactions};
			\node[fill=blue!50] (a) at (-2.5,2.9) {reversed reactions};
			\node[fill=green!50] (a) at (1.8,2.9) {invariant};
			\node[fill=yellow!50, rotate=50] (a) at (4.1,-1.6) {sign change};
        
		\end{tikzpicture}
        
	\end{center}
	\caption{\hl{Number of models among the 506 curated reaction models of BioModels for which multistationarity can be ruled out by using respectively original Thomas's positive circuit condition, Cor.~}\ref{cor:once} (no same reactions),~\ref{cor:reversible} (no reversed reactions),~\ref{cor:pinv} (no invariant) and~\ref{cor:inversed} (sign change), plus~\ref{cor:permuted} (permutation).}\label{fig:venn}
\end{figure}

\hl{Fig.}~\ref{fig:venn} shows a Venn diagram which details the contribution of the different graphical conditions.
One can note that Cor.~\ref{cor:pinv} was in fact useful in only eight of the new models found with the other two corollaries
combined.
The condition of Cor.~\ref{cor:inversed} (sign change of all incoming edges to a set of vertices) is responsible for concluding 
to the absence of multistationarity in 55 more models, of size up to 54 species.

Corollary~\ref{cor:permuted} \hl{allows us to rule out multistationarity} in 20 new models, but with the same maximum size.
Even with the restriction to \hl{single transpositions} as explained above, the \hl{maximum running time on the whole benchmark} becomes much higher than
for the simpler conditions, by five orders of magnitude. 
However, the increase in the number of models for which
multistationarity is proved not to be possible 
with this restricted strategy is relatively high (20) and thus encouraging \hl{for further improvements.
Indeed, better heuristics or more efficient
propagation of the permutation constraint, might lead to even more conclusive results on even larger size problems}.

\subsection{Practical Complexity}

\begin{figure}[htb]
   \begin{center}
      \includegraphics[width=\textwidth]{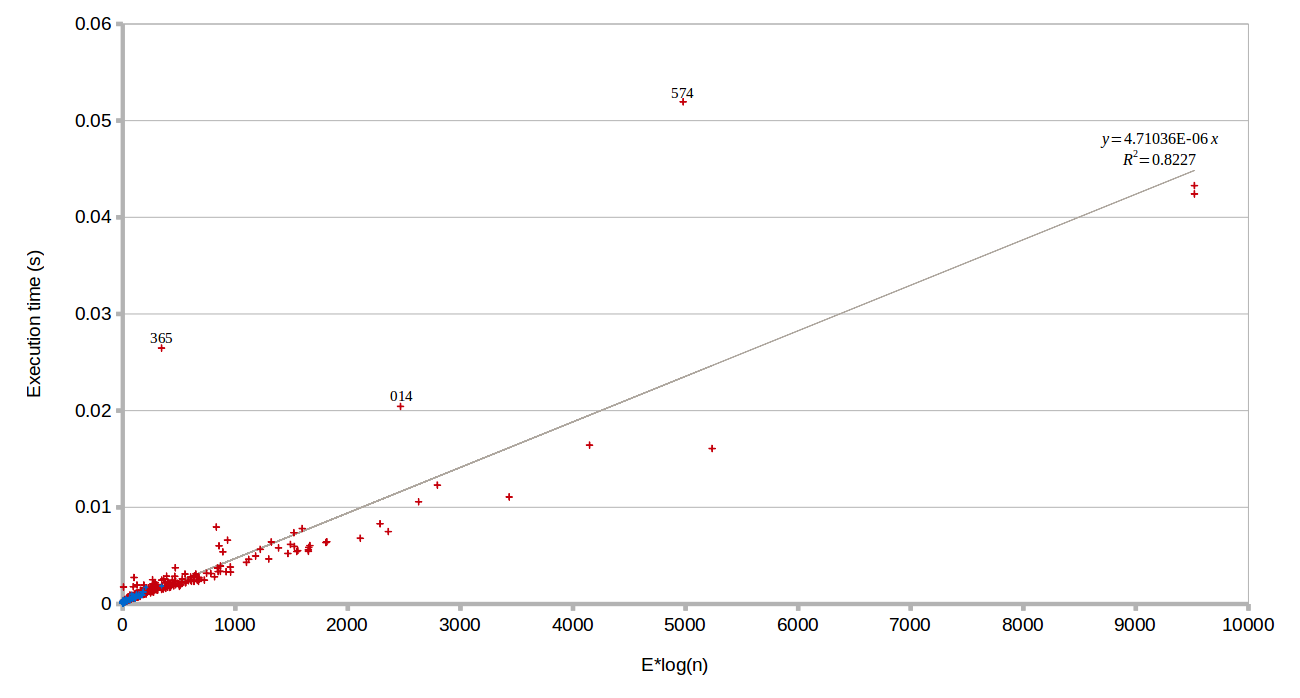}
      \caption{Execution time of the 506 models of BioModels tested with Alg.~\ref{alg:multistat} and~\ref{alg:gauss}
with the conditions of Cor.\ref{cor:once}~\ref{cor:reversible}~\ref{cor:pinv}~\ref{cor:inversed}, relatively to the value of $e\log(n)$. 
The points in blue represent the models for which multistationarity is proved impossible, and the ones in red, those for which the algorithm exhibits a circuit that satisfies the conditions of the Corollaries.}\label{fig:exec_time}
   \end{center}
\end{figure}

The computation times presented in Table~\ref{tab:biomodels} with the use of the first four corollaries are far better than the theoretical complexity bound given in Prop~\ref{prop:complexity}. 
It is known that reducing the graph with only the 4 original rules IN0, IN1, OUT0 and OUT1 can be done with a time complexity in $\mathcal{O}\left(e\log(n)\right)$~\cite{ll88joa} where $e$ is the number of edges in the graph. 
In Alg.~\ref{alg:multistat}, these simplification rules are \hl{in fact} used with a higher priority than the rule INOUTi which are taken with the increasing order on $i$. 
Fig.~\ref{fig:exec_time} plots the computation time for each model relatively to the value of $e\log(n)$.
The linear shape of the curve suggests that the empirical time complexity on BioModels is close to $\mathcal{O}\left(e\log(n)\right)$,
i.e.~the complexity of the 4 original rules, and that the extra rules INOUTi with $i \ge 2$
(although used to conclude on non multistationarity in 65 \hl{over the 160} models)
do not significantly increase the computation time apart from very few cases (models 014, 365 and 574) up to 50ms.

\hl{More precisely}, the rule INOUTi is used with an average maximum value of $i = 3.5$ and 139 models do not use more than $i = 1$ which explains why it does not add much to the computation time. Models 014 and 365 use the rule INOUTi with $i=36$ and $i=98$ respectively which explains their higher computation time. Model 574 uses the rule INOUTi with $i = 8$ which is not uncommon but this graph is dense and each node has at least a degree of 3. In this case, the simple rules with $i = 0$ or $i = 1$ that give a complexity in $\mathcal{O}\left(e\log(n)\right)$ are never used which \hl{basically} explains why this model is the longest to check.

\subsection{Comparison to the Jacobian-based Symbolic Computation Method}

In~\cite{FW13bi}, \hl{Feliu and Wiuf have presented a symbolic computation algorithm implemented in Maple 16
to directly check the existence of roots of some matrix determinant which is equivalent to a non-injectivity property implied by the existence of multiple steady states.
That condition is in principle stronger than the graphical requirements we consider.
Interestingly, they evaluated their algorithm on BioModels, with a version at that time of 365 curated models.
Their method showed that 31.6\% of the networks do not have multiple steady states,
i.e.~the same proportion as us (160 out of 506 networks) when checking the first four corollaries,
but their Maple program failed by memory overflow on 8\% of the networks
whereas our maximum computation time is 0.05s.}

\hl{
Furthermore, the proportion of conclusive analyses raises in our case to 35,5\% (180 out of 506) by using the last corollary
but currently with a high computational cost and a restricted implementation of that condition.
}

\section{Analysis of Multiple Phosphorylation Cycles and MAPK Signalling Models}

\subsection{Wang and Sontag's Futile Phosphorylation Cycles}\label{futile}

\begin{figure}
   \begin{center}
      \includegraphics[width=.5\textwidth]{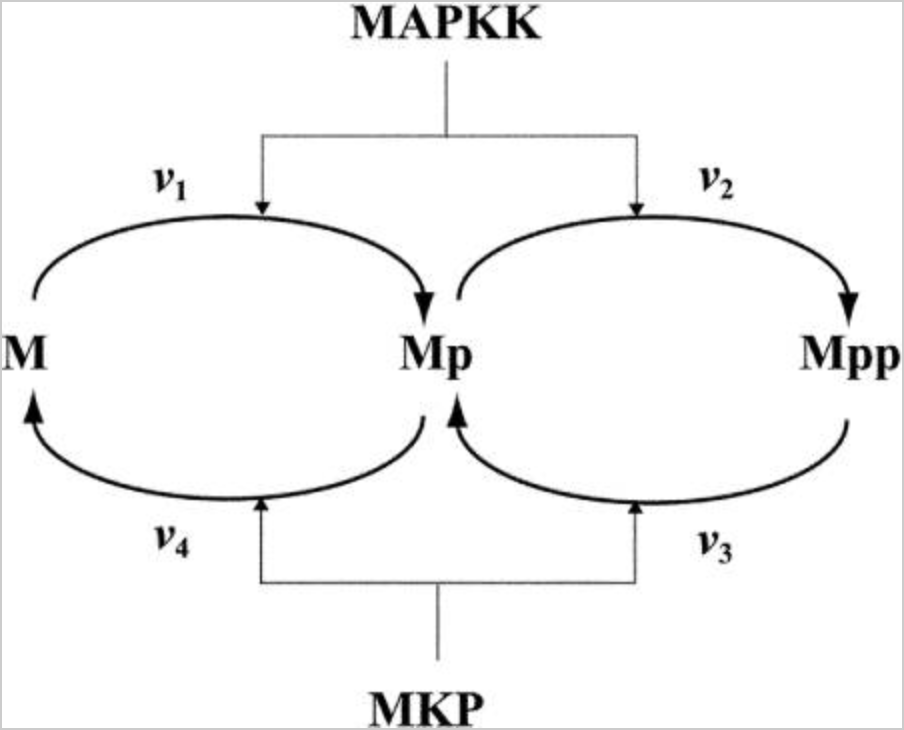}
      \caption{Fig.~1 of~\cite{MHK04jcb} displaying the general
      double-phosphorylation cycle involved in all the models studied by
   Markevich et al.~and by Wang and Sontag~\cite{WS08jmb}.}\label{fig:markevich}
   \end{center}
\end{figure}

In~\cite{FW13bi}, \hl{Feliu and Wiuf also evaluated their method on the
the $r$-site phosphorylation cycles of Wang and Sontag
who showed the existence of multiple steady states in those networks for $r\ge 2$}~\cite{WS08jmb}\hl{.
The case $r=2$, schematized in Fig.}~\ref{fig:markevich}\hl{, was extensively studied by Markevich et al.~in~}\cite{MHK04jcb}\hl{
in a series of models, numbered from 26 to 31 in the
BioModels repository (see~}\cite{GSF10bi} \hl{for the model reduction relationships between these models found by subgraph epimorphisms), 
showing in all cases the existence of multiple steady states.}

\hl{The symbolic computation method used by Feliu and Wiuf 
grew rapidly in time as a function of the number $r$ of phosphorylations, and became impractical after $r=17$ for which it needed 1200 seconds.
Our graphical method has a very low computational complexity on these networks, taking only 1.2s for $r=1000$.
It is worth noting that our method checks necessary conditions for the non-injectivity of the system whereas the symbolic method of Feliu and Wiuf directly determines that property. 
In both cases, though, one cannot conclude that the system does have multiple stationary states since the non-injectivity property is itself a necessary not sufficient condition.
}

\hl{Table}~\ref{tab:r_site_phosphorylation} \hl{summarizes our results.
The first two computation time columns refer to the original model of Wang and Sontag
in which each phosphorylation and dephosphorylation transformation is modelled
by three reactions with mass action law kinetics
with an explicit representation
of the intermediary complexes, 
by repeating the following pattern:}
\begin{equation}
\left\{\begin{array}{l}

       E+S_i \rightleftarrows ES_i \rightarrow E+S_{i+1} \\
       F+S_{i+1} \rightleftarrows FS_{i+1} \rightarrow F+S_i

\end{array}\right.
\label{eq:WangSontag}
\end{equation}
\hl{This gives the following differential equations:}
$$
\begin{array}{rclr}
\dfrac{dS_0}{dt} &=& -k_{\on_0}S_0E + k_{\off_0}ES_0 + l_{cat_0}FS_1 \\
\dfrac{dS_i}{dt} &=& -k_{\on_i}S_iE + k_{\off_i}ES_i + k_{cat_{i-1}}ES_{i-1}  \\ && - l_{\on_{i-1}}S_if +  l_{\off_{i-1}}FS_i + l_{cat_i}FS_{i+1} &, i = 1, ..., n-1\\
\dfrac{dES_j}{dt} &=& k_{\on_j}S_jE - \left(k_{\off_j} + k_{cat_j}\right)ES_j &, j = 0, ..., n-1\\\\
\dfrac{dFS_k}{dt} &=& l_{\on_{k-1}}S_kf - \left(l_{\off_{k-1}} + l_{cat_{k-1}}\right)FS_k &, k = 1, ..., n\\
\end{array}
$$
\hl{The computation times given in Table}~\ref{tab:r_site_phosphorylation} \hl{indicate that, on these networks, our method 
has an empirical complexity of the order of $10^{-3}r^2$.}

\begin{table}[!htb]
   \begin{center}
      \begin{tabular}{ccccc} 
         \toprule
         \hl{$r$} & \hl{Jacobian Method}~\cite{FW13bi} & \multicolumn{3}{r}{\hl{Graphical Method (Alg}~\ref{alg:multistat} \hl{\&}~\ref{alg:gauss}\hl{)}}\\
           &  \hl{model }\eqref{eq:WangSontag}& \eqref{eq:WangSontag} &
         \eqref{eq:Markevich}& \eqref{eq:CatalyticReactions} \\ \midrule
         \hl{ 1    }&\hl{ 4        }&\hl{ 0.3 }&\hl{ 1.6 }&\hl{ 0.6 }\\
         \hl{ 2    }&\hl{ 75       }&\hl{ 0.5 }&\hl{ 2 }&\hl{ 1 }\\
         \hl{ 3    }&\hl{ 44       }&\hl{ 1 }&\hl{ 2 }&\hl{ 1 }\\
         \hl{ 4    }&\hl{ 81       }&\hl{ 1 }&\hl{ 3 }&\hl{ 1 }\\
         \hl{ 5    }&\hl{ 191      }&\hl{ 1 }&\hl{ 4 }&\hl{ 1 }\\
         \hl{ 6    }&\hl{ 256      }&\hl{ 1 }&\hl{ 4 }&\hl{ 1 }\\
         \hl{ 7    }&\hl{ 444      }&\hl{ 1 }&\hl{ 5 }&\hl{ 1 }\\
         \hl{ 8    }&\hl{ 795      }&\hl{ 2 }&\hl{ 5 }&\hl{ 1 }\\
         \hl{ 9    }&\hl{ 1169     }&\hl{ 2 }&\hl{ 6 }&\hl{ 2 }\\
         \hl{ 10   }&\hl{ 2195     }&\hl{ 2 }&\hl{ 6 }&\hl{ 2 }\\
         \hl{ 11   }&\hl{ 3998     }&\hl{ 2 }&\hl{ 6 }&\hl{ 2 }\\
         \hl{ 12   }&\hl{ 7696     }&\hl{ 2 }&\hl{ 7 }&\hl{ 2 }\\
         \hl{ 13   }&\hl{ 15180    }&\hl{ 2 }&\hl{ 7 }&\hl{ 2 }\\
         \hl{ 14   }&\hl{ 32180    }&\hl{ 3 }&\hl{ 7 }&\hl{ 2 }\\
         \hl{ 15   }&\hl{ 67740    }&\hl{ 3 }&\hl{ 7 }&\hl{ 2 }\\
         \hl{ 16   }&\hl{ 171700   }&\hl{ 3 }&\hl{ 8 }&\hl{ 2 }\\
         \hl{ 17   }&\hl{ 1199000  }&\hl{ 4 }&\hl{ 8 }&\hl{ 2 }\\
         \hl{ 50   }&\hl{ $\times$ }&\hl{ 12 }&\hl{ 17 }&\hl{ 4 }\\
         \hl{ 100   }&\hl{ $\times$ }&\hl{ 26 }&\hl{ 40 }&\hl{ 6 }\\
         \hl{ 500   }&\hl{ $\times$ }&\hl{ 343 }&\hl{549 }&\hl{ 34}\\
         \hl{ 1000 }&\hl{ $\times$ }&\hl{ 1200  }&\hl{ 1874  }&\hl{ 98 }\\

         \bottomrule
      \end{tabular}
   \end{center}
   \caption{Execution times given in milliseconds for the analysis of the $r$-site phosphorylation system of~\cite{WS08jmb},
first as reported in~\cite{FW13bi} for the Jacobian method using symbolic computation, then obtained with our graphical algorithm on the same model and on two variants concerning the writing of the dephosphorylation and phosphorylation reactions.}\label{tab:r_site_phosphorylation}
\end{table}

\hl{The third computation time column refers to the writing of the 
dephosphorylations with two intermediate complexes, as follows:}
\begin{equation}
\left\{\begin{array}{l}

       E+S_i \rightleftarrows ES_i \rightarrow E+S_{i+1} \\
       F+S_{i+1} \rightleftarrows FS_{i+1}^\star \rightarrow FS_i \rightleftarrows F+S_i

\end{array}\right.
\label{eq:Markevich}
\end{equation}
\hl{
This writing of the dephosphorylations corresponds to the first model of Markevich et al.~}\cite{MHK04jcb}.
\hl{On this reaction pattern }\eqref{eq:Markevich}\hl{, our graph algorithm
   has execution times similar to those obtained on reaction pattern
}\eqref{eq:WangSontag}.
This is due to the resemblance of their influence multigraphs.
Nevertheless, the rule INOUTi is used \hl{here} with $i \le 12$, 
while on model pattern~\eqref{eq:WangSontag}\hl{ it is used with value at most $9$.
This is responsible for a slight difference in response time.}

\hl{The second model of Markevitch et al.~}\cite{MHK04jcb} \hl{
is a reduction of the previous model using Michaelian kinetics.
The intermediary complexes are eliminated
but}
the writing of the kinetics for the
dephosphorylation of Mp by phosphatase MKP3, named $v_4$ in the original
article~\cite{MHK04jcb}, is \hl{not a naive Michaelis-Menten kinetics but the following one}:
\[v_4 = \frac{k_4^{cat}\cdot{[\mathit{MKP3}]}_{tot}\cdot [Mp]/K_{m3}}{(1 +
[Mpp]/K_{m3} + [Mp]/K_{m4} + [M]/K_{m5})}\]
\hl{Mpp appears as inhibitor in this
kinetic expression to represent} the sequestration of the phosphatase in the
reversible last step of dephosphorylation, 
Such a sequestration results in a negative influence of Mpp on M and
a positive influence of Mpp on Mp (i.e., a positive term in
$\frac{\partial\dot{Mp}}{\partial Mpp}$) as this reaction consumes Mp to
produce M.
Intuitively the fact that Mpp can actively sequestrate the phosphatase MKP3
makes it inhibit the dephosphorylation of Mp and therefore stabilizes Mp.
\hl{Therefore, while the positive circuit between Mp and Mpp by $v_2$ and $v_3$ that can be easily seen from Fig.~}\ref{fig:markevich} 
\hl{is immediately rule out since $v_2$ and $v_3$ are opposite reactions},
the complex Michaelian kinetics of \cite{MHK04jcb}
gives rise to a completely different positive circuit between those two
species (with kinetics $v_4$ and $v_2$). This circuit cannot be removed by sign-changes or
any single swap and is indeed responsible for the appearance of bistability in those models.

\hl{The fourth computation time column refers to that model structure,
with catalytic reactions instead of intermediary complexes, but using mass action law (or simple Michaelis-Menten) kinetics, 
with the following pattern:} 
\begin{equation}
\left\{\begin{array}{l}

       E+S_i \rightarrow E+S_{i+1} \\
       F+S_{i+1} \rightarrow  F+S_i

\end{array}\right.
\label{eq:CatalyticReactions}
\end{equation}
\hl{The differential equations for this pattern with simple Michaelis-Menten kinetics are as follows :}
$$
\begin{array}{rclr}
\dfrac{dS_0}{dt} &=& -\dfrac{V_0ES_0}{K_0+S_0} + \dfrac{V_0^*fS_{1}}{K_0^*+S_1}\\
\dfrac{dS_i}{dt} &=& -\dfrac{V_iES_i}{K_i+S_i} + \dfrac{k_{i-1}ES_{i-1}}{K_{i-1}+S_{i-1}} \\
&& + \dfrac{V_i^*fS_{i+1}}{K_i^*+S_{i+1}} - \dfrac{V_{i-1}^*fS_i}{K_{i-1}^* + S_i} &~~~~, i=1,...,n-1 \\
\dfrac{dS_n}{dt} &=& \dfrac{V_{n-1}ES_{n-1}}{K_{n-1}+S_{n-1}} - \dfrac{V_{n-1}^*fS_n}{K_{n-1}^*+S_n}
\end{array}
$$
\hl{In this modelling of the system, the possibilities of multistationarity disappear,
and this is shown by the result of our graphical algorithm.
On large instances, the computation time is also significantly lower.
This is because the graph does not contain any positive cycle satisfying our conditions,
and the graph algorithm has only to remove nodes using the rule INOUTi with $i\le 1$.
}

\subsection{MAPK Signalling Models}\label{sec:erk}

\begin{figure}[htb]
   \begin{center}
      \includegraphics[width=.8\textwidth]{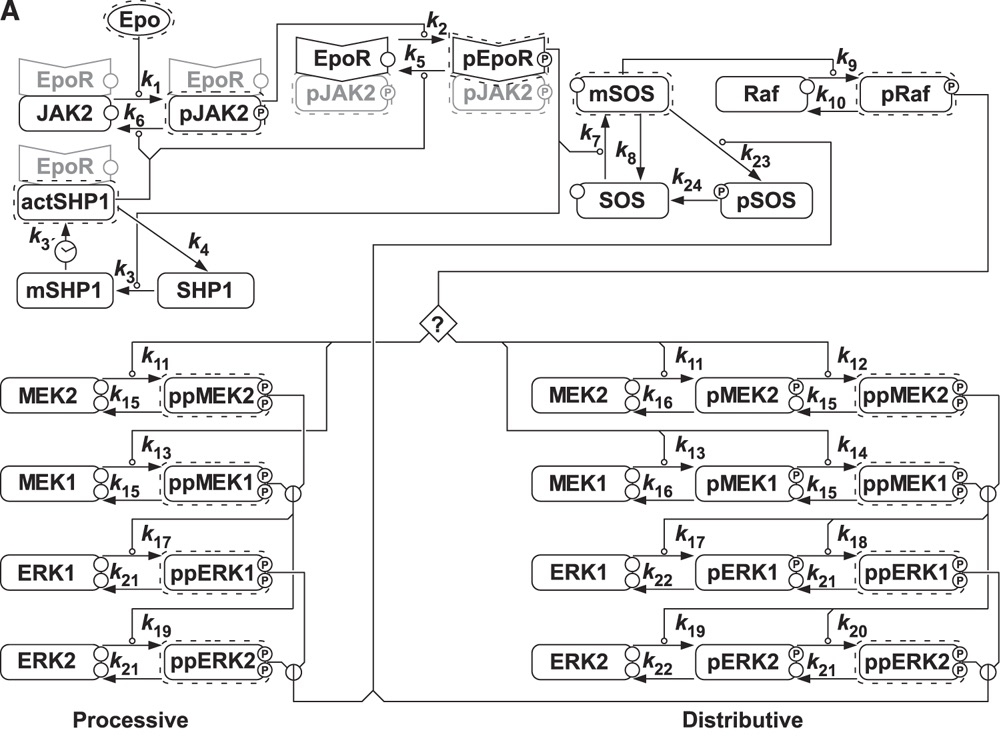}
      \caption{Figure 1A of~\cite{SMHWKKLTK09msb}, on the right is the
      distributive model encoded as model 270 in the BioModels repository.
      Several other variables appear in the model because of the encoding with ODEs of
      Delay Differential Equations involved close to the receptor.}\label{fig:erk}
   \end{center}
\end{figure}

\hl{
Double phosphorylation cycles are integral parts of MAPK signalling cascades
and one might expect to observe multistationarity in MAPK models.
However, as shown in the previous section, this depends on the way the phosphorylation and dephosphorylation reactions are modeled.
}

\hl{Model 270 of the BioModels database} describes a complete Epo-induced ERK signalling cascade, from
receptor binding to cell fate decision, corresponding to the \emph{distributive} model of~\cite{SMHWKKLTK09msb}, schematized in
Fig.~\ref{fig:erk}.
It includes four reversible
double-phosphorylation stages (MEK2, MEK1, ERK1 and ERK2) and many dummy
variables introduced at the beginning of the cascade to encode delay ordinary
equations into simple ODEs. The resulting reaction model has 33 species and 42 reactions,
and leads to a labelled influence \hl{multigraph} containing 126 arcs with many positive and
negative feedback loops. 

During the run of Alg.~\ref{alg:multistat}, only 2
paths are removed thanks to one of the 9 conservation laws (Cor.~\ref{cor:pinv}), but 58 are removed
thanks to Cor.~\ref{cor:once} and~\ref{cor:reversible}, resulting in no single
positive feedback loop satisfying the conditions of our theorem.

The fact that multistationarity is not possible for such a model is consistent
with the data shown by the authors in the article, i.e.~functional dose-response diagrams without hystereses.
This is however not evident, nor perhaps expected, from the
model itself since memory effects resulting in hysteresis might have been
possible at many different places in the network due to the dephosphorylation loops.
\hl{However, as explained in the previous section with model~}\ref{eq:CatalyticReactions} 
\hl{this is not the case when simple Michaelis-Menten kinetics are used}.

These examples show that the existence of multiple steady states in reaction networks
is sensitive to the explicit representation of the intermediate complexes in enzymatic reactions,
or at least to the explicit inclusion of their inhibitors (by sequestration) in the kinetics.
Interestingly, our refined conditions are similarly sensitive to these subtle modelling choices 
and allow us to conclude differently according to the impact of the writing of the reactions
on the multistationarity properties of the system. 
The role of intermediate complexes in multistationarity was analysed in detail in~\cite{FC13rsi}. 
In particular, it was shown that if the network does not have conservation laws, 
then multistationarity cannot arise after the introduction of intermediate complexes.

These remarks also go in the same direction to what has been observed for oscillations in the
MAPK cascade again, where the absence of complexation removes the negative feedbacks
going upwards in the cascade and therefore the negative feedback loops and
the corresponding possibility of oscillations~\cite{VSM08plos}. If the intermediary complexes are
explicitly represented, then oscillations can be found~\cite{QNKS07plos},
without any external negative feedback reaction such as receptor desensitization~\cite{Kholodenko00ejb}.
\hl{In many other networks in} BioModels, Alg.~\ref{alg:multistat}
actually shows as side-effect the existence of numerous negative feedback loops.

\section{Conclusion}\label{sec:conclu}

This experiment is, to our knowledge, the first systematic evaluation of
\hl{graphical requirements for multistationarity in the reaction networks of model repositories in large scale.
We have shown that Thomas's necessary condition for multistationarity, and its refinement for reaction models given in~}\cite{Soliman13bmb},
can be implemented with a graph rewriting algorithm that brings useful information for
many models in BioModels,
by proving the non-existence of multiple steady states independently of the parameter values and of the precise form of the rate functions.
\hl{Though the
original Thomas's conditions show the absence of multistationarity in some small models, 
the refined conditions are conclusive in many more cases: 180 vs~48, including
much bigger models: up to 54 vertices vs~18.
Furthermore this is achieved at a remarkably low computational cost, below 0.05 second per network for the main conditions, 
even on models with several hundreds of molecular species and thousands of influence arcs, currently out of reach of symbolic computation methods}.

It is worth noting that our graph-theoretic algorithm is not limited to reaction systems with mass action law kinetics,
but relies on a simple symbolic derivation algorithm for computing an over-approximation of the signs of the partial derivatives in the Jacobian matrix.
\hl{In case of indeterminacy of the sign, both signs are assumed which may lead to the existence of circuits that would have been ruled out by a more accurate
determination of the sign.}
Our graphical algorithm could also be improved by a more efficient \hl{and complete} use of the \hl{condition dealing with target species permutations},
for instance by recourse to constraint propagation algorithms~\cite{FL06cor} instead of the current generate-and-test procedure for single swappings.
\hl{This might further increase the number of conclusive cases}.
\hl{Another way would be to use the condition noted (*)
in}~\cite{BC09cms} \hl{to rule out
the positive circuits that do not intersect another positive circuit on a
species-to-reaction path, which is necessary for multistationarity.}

Since our procedure, and more precisely Alg.~\ref{alg:sanitize}, goes
through an import of the ODE system \hl{and infers a reaction network, it can be readily used} on dynamical systems that do not stem from reaction
networks but may exhibit similar symmetries. Such use of the refined conditions in general ODE systems would probably benefit much less from the structural
conditions added on top of Thomas's rules, but by identifying similar terms in the
ODEs, our algorithm should be able to automatically prove the absence of
multistationarity in interesting cases, \hl{as also suggested in}~\cite{BC09cms}.

A comparison to Feinberg's CRNT-based approaches would also be interesting,
\hl{%
   by considering the different approaches summarized for instance in Table 3
   of
}~\cite{FW13bi}.
\hl{In particular, our circuit conditions on the influence multigraph depend on the signs of the entries of the Jacobian matrix but are independent 
of not only the values of the kinetic parameters, but also of the form of the reaction rate functions which can be any partially differentiable function,
i.e.~without any monotonicity, non-autocatalytic, or such restriction.}

\hl{Finally, this study focussed on multistationarity, but we saw that most
models of the benchmark also have negative circuits.
A systematic study of the oscillation conditions in reaction model repositories, 
possibly using a similar theoretical refinement of Thomas-Snoussi's necessary
conditions for sustained oscillations}~\cite{Snoussi98jbs},
\hl{would be worth investigating 
as natural systems indeed provide many oscillators and}
even models not conceived to oscillate have been shown capable of exhibiting unexpected sustained oscillations in non-standard conditions~\cite{QNKS07plos}.

\subsection*{Acknowledgments.}
This work has been supported by the bilateral project ANR-17-CE40-0036 
SYMBIONT ({https://www.symbiont-project.org}).
We are also grateful to the reviewers for their useful comments for improving the presentation of our results.









\section*{References}
\bibliographystyle{elsarticle-harv}

\bibliography{contraintes}

\begin{thebibliography}{34}
\expandafter\ifx\csname natexlab\endcsname\relax\def\natexlab#1{#1}\fi
\providecommand{\url}[1]{\texttt{#1}}
\providecommand{\href}[2]{#2}
\providecommand{\path}[1]{#1}
\providecommand{\DOIprefix}{doi:}
\providecommand{\ArXivprefix}{arXiv:}
\providecommand{\URLprefix}{URL: }
\providecommand{\Pubmedprefix}{pmid:}
\providecommand{\doi}[1]{\href{http://dx.doi.org/#1}{\path{#1}}}
\providecommand{\Pubmed}[1]{\href{pmid:#1}{\path{#1}}}
\providecommand{\bibinfo}[2]{#2}
\ifx\xfnm\relax \def\xfnm[#1]{\unskip,\space#1}\fi
\bibitem[{Banaji and Craciun(2009)}]{BC09cms}
\bibinfo{author}{Banaji, M.}, \bibinfo{author}{Craciun, G.},
  \bibinfo{year}{2009}.
\newblock \bibinfo{title}{Graph-theoretic approaches to injectivity and
  multiple equilibria in systems of interacting elements}.
\newblock \bibinfo{journal}{Communications in Mathematical Sciences}
  \bibinfo{volume}{7}, \bibinfo{pages}{867--900}.
\bibitem[{Banaji and Craciun(2010)}]{BC10aam}
\bibinfo{author}{Banaji, M.}, \bibinfo{author}{Craciun, G.},
  \bibinfo{year}{2010}.
\newblock \bibinfo{title}{Graph-theoretic criteria for injectivity and unique
  equilibria in general chemical reaction systems}.
\newblock \bibinfo{journal}{Advances in Applied Mathematics}
  \bibinfo{volume}{44}, \bibinfo{pages}{168--184}.
\bibitem[{Calzone et~al.(2006)Calzone, Fages and Soliman}]{cfs06bi}
\bibinfo{author}{Calzone, L.}, \bibinfo{author}{Fages, F.},
  \bibinfo{author}{Soliman, S.}, \bibinfo{year}{2006}.
\newblock \bibinfo{title}{{BIOCHAM}: An environment for modeling biological
  systems and formalizing experimental knowledge}.
\newblock \bibinfo{journal}{Bioinformatics} \bibinfo{volume}{22},
  \bibinfo{pages}{1805--1807}.
\newblock \DOIprefix\doi{10.1093/bioinformatics/btl172}.
\bibitem[{Chelliah et~al.(2013)Chelliah, Laibe and Nov{\`e}re}]{CLN13issb}
\bibinfo{author}{Chelliah, V.}, \bibinfo{author}{Laibe, C.},
  \bibinfo{author}{Nov{\`e}re, N.}, \bibinfo{year}{2013}.
\newblock \bibinfo{title}{Biomodels database: A repository of mathematical
  models of biological processes}, in: \bibinfo{editor}{Schneider, M.V.} (Ed.),
  \bibinfo{booktitle}{In Silico Systems Biology}. \bibinfo{publisher}{Humana
  Press}. volume \bibinfo{volume}{1021} of \textit{\bibinfo{series}{Methods in
  Molecular Biology}}, pp. \bibinfo{pages}{189--199}.
\newblock \DOIprefix\doi{10.1007/978-1-62703-450-0_10}.
\bibitem[{Craciun and Feinberg(2006)}]{CF06siamjam}
\bibinfo{author}{Craciun, G.}, \bibinfo{author}{Feinberg, M.},
  \bibinfo{year}{2006}.
\newblock \bibinfo{title}{Multiple equilibria in complex chemical reaction
  networks: {II}. the species-reaction graph}.
\newblock \bibinfo{journal}{SIAM Journal on Applied Mathematics}
  \bibinfo{volume}{66}, \bibinfo{pages}{1321--1338}.
\bibitem[{Fages et~al.(2015)Fages, Gay and Soliman}]{FGS15tcs}
\bibinfo{author}{Fages, F.}, \bibinfo{author}{Gay, S.},
  \bibinfo{author}{Soliman, S.}, \bibinfo{year}{2015}.
\newblock \bibinfo{title}{Inferring reaction systems from ordinary differential
  equations}.
\newblock \bibinfo{journal}{Theoretical Computer Science}
  \bibinfo{volume}{599}, \bibinfo{pages}{64--78}.
\newblock \DOIprefix\doi{10.1016/j.tcs.2014.07.032}.
\bibitem[{Fages and Lal(2006)}]{FL06cor}
\bibinfo{author}{Fages, F.}, \bibinfo{author}{Lal, A.}, \bibinfo{year}{2006}.
\newblock \bibinfo{title}{A constraint programming approach to cutset
  problems}.
\newblock \bibinfo{journal}{Journal Computers and Operations Research}
  \bibinfo{volume}{33:10}, \bibinfo{pages}{2852--2865}.
\bibitem[{Fages et~al.(2018)Fages, Martinez, Rosenblueth and
  Soliman}]{FMRS18tcbb}
\bibinfo{author}{Fages, F.}, \bibinfo{author}{Martinez, T.},
  \bibinfo{author}{Rosenblueth, D.}, \bibinfo{author}{Soliman, S.},
  \bibinfo{year}{2018}.
\newblock \bibinfo{title}{Influence networks compared with reaction networks:
  Semantics, expressivity and attractors}.
\newblock \bibinfo{journal}{IEEE/ACM Transactions on Computational Biology and
  Bioinformatics} \DOIprefix\doi{10.1109/TCBB.2018.2805686}.
\bibitem[{Fages et~al.(2017)Fages, Martinez and Soliman}]{biocham4}
\bibinfo{author}{Fages, F.}, \bibinfo{author}{Martinez, T.},
  \bibinfo{author}{Soliman, S.}, \bibinfo{year}{2017}.
\newblock \bibinfo{title}{{BIOCHAM} v4.0 User's Manual}.
\newblock \bibinfo{organization}{{Inria}}.
\newblock \URLprefix \url{http://lifeware.inria.fr/biocham4/}.
\bibitem[{Fages and Soliman(2008a)}]{FS08tcs}
\bibinfo{author}{Fages, F.}, \bibinfo{author}{Soliman, S.},
  \bibinfo{year}{2008}a.
\newblock \bibinfo{title}{Abstract interpretation and types for systems
  biology}.
\newblock \bibinfo{journal}{Theoretical Computer Science}
  \bibinfo{volume}{403}, \bibinfo{pages}{52--70}.
\newblock \DOIprefix\doi{10.1016/j.tcs.2008.04.024}.
\bibitem[{Fages and Soliman(2008b)}]{FS08fmsb}
\bibinfo{author}{Fages, F.}, \bibinfo{author}{Soliman, S.},
  \bibinfo{year}{2008}b.
\newblock \bibinfo{title}{From reaction models to influence graphs and back: a
  theorem}, in: \bibinfo{booktitle}{Proceedings of Formal Methods in Systems
  Biology {FMSB'08}}, \bibinfo{publisher}{Springer-Verlag}.
\newblock \DOIprefix\doi{10.1007/978-3-540-68413-8_7}.
\bibitem[{Feinberg(1977)}]{Feinberg77crt}
\bibinfo{author}{Feinberg, M.}, \bibinfo{year}{1977}.
\newblock \bibinfo{title}{Mathematical aspects of mass action kinetics}, in:
  \bibinfo{editor}{Lapidus, L.}, \bibinfo{editor}{Amundson, N.R.} (Eds.),
  \bibinfo{booktitle}{Chemical Reactor Theory: A Review}.
  \bibinfo{publisher}{Prentice-Hall}. chapter~\bibinfo{chapter}{1}, pp.
  \bibinfo{pages}{1--78}.
\bibitem[{Feliu and Wiuf(2013a)}]{FW13bi}
\bibinfo{author}{Feliu, E.}, \bibinfo{author}{Wiuf, C.}, \bibinfo{year}{2013}a.
\newblock \bibinfo{title}{A computational method to preclude multistationarity
  in networks of interacting species}.
\newblock \bibinfo{journal}{Bioinformatics} \bibinfo{volume}{29},
  \bibinfo{pages}{2327--2334}.
\bibitem[{Feliu and Wiuf(2013b)}]{FC13rsi}
\bibinfo{author}{Feliu, E.}, \bibinfo{author}{Wiuf, C.}, \bibinfo{year}{2013}b.
\newblock \bibinfo{title}{Simplifying biochemical models with intermediate
  species}.
\newblock \bibinfo{journal}{Journal of The Royal Society Interface}
  \bibinfo{volume}{10}.
\bibitem[{Gay et~al.(2010)Gay, Soliman and Fages}]{GSF10bi}
\bibinfo{author}{Gay, S.}, \bibinfo{author}{Soliman, S.},
  \bibinfo{author}{Fages, F.}, \bibinfo{year}{2010}.
\newblock \bibinfo{title}{A graphical method for reducing and relating models
  in systems biology}.
\newblock \bibinfo{journal}{Bioinformatics} \bibinfo{volume}{26},
  \bibinfo{pages}{i575--i581}.
\newblock \DOIprefix\doi{10.1093/bioinformatics/btq388}. \bibinfo{note}{special
  issue ECCB'10}.
\bibitem[{Hucka et~al.(2008)Hucka, Hoops, Keating, Nicolas, Sahle and
  Wilkinson}]{Hucka08sbml2}
\bibinfo{author}{Hucka, M.}, \bibinfo{author}{Hoops, S.},
  \bibinfo{author}{Keating, S.M.}, \bibinfo{author}{Nicolas, L.N.},
  \bibinfo{author}{Sahle, S.}, \bibinfo{author}{Wilkinson, D.},
  \bibinfo{year}{2008}.
\newblock \bibinfo{title}{Systems biology markup language ({SBML}) level 2:
  Structures and facilities for model definitions}.
\newblock \bibinfo{journal}{Nature Precedings}
  \DOIprefix\doi{10.1038/npre.2008.2715.1}.
\bibitem[{Ivanova(1979)}]{Ivanova79kk}
\bibinfo{author}{Ivanova, A.N.}, \bibinfo{year}{1979}.
\newblock \bibinfo{title}{Conditions for the uniqueness of the stationary
  states of kinetic systems, connected with the structures of their reaction
  mechanisms}.
\newblock \bibinfo{journal}{Kinet. Katal.} \bibinfo{volume}{20},
  \bibinfo{pages}{1019--1023}.
\bibitem[{Ivanova and Tarnopolskii(1979)}]{IT79kk}
\bibinfo{author}{Ivanova, A.N.}, \bibinfo{author}{Tarnopolskii, B.L.},
  \bibinfo{year}{1979}.
\newblock \bibinfo{title}{One approach to the determination of a number of
  qualitative features in the behavior of kinetic systems, and realization of
  this approach in a computer (critical conditions, autooscillations)}.
\newblock \bibinfo{journal}{Kinet. Katal.} \bibinfo{volume}{20},
  \bibinfo{pages}{1541--1548}.
\bibitem[{Kaltenbach(2012)}]{Kaltenbach12arxiv}
\bibinfo{author}{Kaltenbach, H.M.}, \bibinfo{year}{2012}.
\newblock \bibinfo{title}{A unified view on bipartite species-reaction and
  interaction graphs for chemical reaction networks}.
\newblock \bibinfo{howpublished}{arXiv:1210.0320}.
\newblock \href{http://arxiv.org/abs/1210.0320}{{\tt arXiv:1210.0320}}.
\bibitem[{Kholodenko(2000)}]{Kholodenko00ejb}
\bibinfo{author}{Kholodenko, B.N.}, \bibinfo{year}{2000}.
\newblock \bibinfo{title}{Negative feedback and ultrasensitivity can bring
  about oscillations in the mitogen-activated protein kinase cascades}.
\newblock \bibinfo{journal}{European Journal of Biochemistry}
  \bibinfo{volume}{267}, \bibinfo{pages}{1583--1588}.
\newblock \DOIprefix\doi{10.1046/j.1432-1327.2000.01197.x}.
\bibitem[{Levy and Low(1988)}]{ll88joa}
\bibinfo{author}{Levy, H.}, \bibinfo{author}{Low, D.W.}, \bibinfo{year}{1988}.
\newblock \bibinfo{title}{A contraction algorithm for finding small cycle
  cutsets}.
\newblock \bibinfo{journal}{Journal of algorithms} \bibinfo{volume}{9},
  \bibinfo{pages}{470--493}.
\bibitem[{Markevich et~al.(2004)Markevich, Hoek and Kholodenko}]{MHK04jcb}
\bibinfo{author}{Markevich, N.I.}, \bibinfo{author}{Hoek, J.B.},
  \bibinfo{author}{Kholodenko, B.N.}, \bibinfo{year}{2004}.
\newblock \bibinfo{title}{Signaling switches and bistability arising from
  multisite phosphorylation in protein kinase cascades}.
\newblock \bibinfo{journal}{Journal of Cell Biology} \bibinfo{volume}{164},
  \bibinfo{pages}{353--359}.
\newblock \DOIprefix\doi{10.1083/jcb.200308060}.
\bibitem[{Nabli et~al.(2016)Nabli, Martinez, Fages and
  Soliman}]{NMFS16constraints}
\bibinfo{author}{Nabli, F.}, \bibinfo{author}{Martinez, T.},
  \bibinfo{author}{Fages, F.}, \bibinfo{author}{Soliman, S.},
  \bibinfo{year}{2016}.
\newblock \bibinfo{title}{On enumerating minimal siphons in petri nets using
  {CLP} and {SAT} solvers: Theoretical and practical complexity}.
\newblock \bibinfo{journal}{Constraints} \bibinfo{volume}{21},
  \bibinfo{pages}{251--276}.
\newblock \DOIprefix\doi{10.1007/s10601-015-9190-1}.
\bibitem[{Pardalos et~al.(1999)Pardalos, Qian and Resende}]{pqr99jco}
\bibinfo{author}{Pardalos, P.M.}, \bibinfo{author}{Qian, T.},
  \bibinfo{author}{Resende, M.G.}, \bibinfo{year}{1999}.
\newblock \bibinfo{title}{A greedy randomized adaptive search procedure for the
  feedback vertex set problem}.
\newblock \bibinfo{journal}{Journal of Combinatorial Optimization}
  \bibinfo{volume}{2}, \bibinfo{pages}{399--412}.
\bibitem[{Qiao et~al.(2007)Qiao, Nachbar, Kevrekidis and
  Shvartsman}]{QNKS07plos}
\bibinfo{author}{Qiao, L.}, \bibinfo{author}{Nachbar, R.B.},
  \bibinfo{author}{Kevrekidis, I.G.}, \bibinfo{author}{Shvartsman, S.Y.},
  \bibinfo{year}{2007}.
\newblock \bibinfo{title}{Bistability and oscillations in the huang-ferrell
  model of mapk signaling}.
\newblock \bibinfo{journal}{{PLoS} Computational Biology} \bibinfo{volume}{3},
  \bibinfo{pages}{1819--1826}.
\bibitem[{Schilling et~al.(2009)Schilling, Maiwald, Hengl, Winter, Kreutz,
  Kolch, Lehmann, Timmer and Klingm{\"u}ller}]{SMHWKKLTK09msb}
\bibinfo{author}{Schilling, M.}, \bibinfo{author}{Maiwald, T.},
  \bibinfo{author}{Hengl, S.}, \bibinfo{author}{Winter, D.},
  \bibinfo{author}{Kreutz, C.}, \bibinfo{author}{Kolch, W.},
  \bibinfo{author}{Lehmann, W.D.}, \bibinfo{author}{Timmer, J.},
  \bibinfo{author}{Klingm{\"u}ller, U.}, \bibinfo{year}{2009}.
\newblock \bibinfo{title}{Theoretical and experimental analysis links isoform-
  specific erk signalling to cell fate decisions}.
\newblock \bibinfo{journal}{Molecular Systems Biology} \bibinfo{volume}{5}.
\newblock \DOIprefix\doi{10.1038/msb.2009.91},
  \href{http://arxiv.org/abs/http://msb.embopress.org/content/5/1/334.full.pdf}{{\tt
  arXiv:http://msb.embopress.org/content/5/1/334.full.pdf}}.
\bibitem[{Snoussi(1998)}]{Snoussi98jbs}
\bibinfo{author}{Snoussi, E.H.}, \bibinfo{year}{1998}.
\newblock \bibinfo{title}{Necessary conditions for multistationarity and stable
  periodicity}.
\newblock \bibinfo{journal}{Journal of Biological Systems} \bibinfo{volume}{6},
  \bibinfo{pages}{3--9}.
\newblock \DOIprefix\doi{10.1142/S0218339098000042}.
\bibitem[{Soliman(2012)}]{Soliman12amb}
\bibinfo{author}{Soliman, S.}, \bibinfo{year}{2012}.
\newblock \bibinfo{title}{Invariants and other structural properties of
  biochemical models as a constraint satisfaction problem}.
\newblock \bibinfo{journal}{Algorithms for Molecular Biology}
  \bibinfo{volume}{7}.
\newblock \DOIprefix\doi{10.1186/1748-7188-7-15}.
\bibitem[{Soliman(2013)}]{Soliman13bmb}
\bibinfo{author}{Soliman, S.}, \bibinfo{year}{2013}.
\newblock \bibinfo{title}{A stronger necessary condition for the
  multistationarity of chemical reaction networks}.
\newblock \bibinfo{journal}{Bulletin of Mathematical Biology}
  \bibinfo{volume}{75}, \bibinfo{pages}{2289--2303}.
\newblock \DOIprefix\doi{10.1007/s11538-013-9893-7}.
\bibitem[{Soul{\'e}(2003)}]{Soule03complexus}
\bibinfo{author}{Soul{\'e}, C.}, \bibinfo{year}{2003}.
\newblock \bibinfo{title}{Graphic requirements for multistationarity}.
\newblock \bibinfo{journal}{{ComplexUs}} \bibinfo{volume}{1},
  \bibinfo{pages}{123--133}.
\bibitem[{Tarjan(1972)}]{Tarjan72siam}
\bibinfo{author}{Tarjan, R.}, \bibinfo{year}{1972}.
\newblock \bibinfo{title}{Depth-first search and linear graph algorithms}.
\newblock \bibinfo{journal}{SIAM Journal on Computing} \bibinfo{volume}{1},
  \bibinfo{pages}{146--160}.
\newblock \DOIprefix\doi{10.1137/0201010}.
\bibitem[{Thomas(1981)}]{Thomas81sss}
\bibinfo{author}{Thomas, R.}, \bibinfo{year}{1981}.
\newblock \bibinfo{title}{On the relation between the logical structure of
  systems and their ability to generate multiple steady states or sustained
  oscillations}.
\newblock \bibinfo{journal}{Springer Ser. Synergetics} \bibinfo{volume}{9},
  \bibinfo{pages}{180--193}.
\bibitem[{Ventura et~al.(2008)Ventura, Sepulchre and Merajver}]{VSM08plos}
\bibinfo{author}{Ventura, A.C.}, \bibinfo{author}{Sepulchre, J.A.},
  \bibinfo{author}{Merajver, S.D.}, \bibinfo{year}{2008}.
\newblock \bibinfo{title}{A hidden feedback in signaling cascades is revealed}.
\newblock \bibinfo{journal}{{PLoS} Computational Biology} \bibinfo{volume}{4},
  \bibinfo{pages}{e1000041}.
\newblock \DOIprefix\doi{10.1371\%2Fjournal.pcbi.1000041}.
\bibitem[{Wang and Sontag(2008)}]{WS08jmb}
\bibinfo{author}{Wang, L.}, \bibinfo{author}{Sontag, E.}, \bibinfo{year}{2008}.
\newblock \bibinfo{title}{On the number of steady states in a multiple futile
  cycle}.
\newblock \bibinfo{journal}{Journal of Mathematical Biology}
  \bibinfo{volume}{57}, \bibinfo{pages}{29--52}.

\end{thebibliography}

\appendix

\subsection*{Afterword \emph{in memoriam} of Ren\'e Thomas} 
\hl{It was under the sun of the University of Marseille, at the CIRM in 2008, that the
second author met Ren\'e Thomas and his wife, in the friendly atmosphere of a
summer school where Ren\'e participated to all talks giving advices with
extreme modesty and continuing providing deep insights during the traditional walk to the Callanques.
In his talk, he mentioned that the gene interactions he had always been considering were in fact
influences, and that such regulatory networks should be called influence
networks. At that time, we were more interested in biochemical reaction networks
for which Thomas's conditions generally provide no information.
That was the starting point of an adventure that led
the third author to refine Thomas's conditions for reaction networks,
and the first author to implement and successfully apply them in large scale,
showing the richness of Ren\'e Thomas's intuitions across so many decades of
active research in mathematical biology.
}

\end{document}